\documentclass[runningheads]{llncs}
\usepackage[T1]{fontenc}
\usepackage{graphicx}
\usepackage{orcidlink}

\usepackage{amsmath,amssymb,wasysym}
\usepackage{proof}

\hypersetup{colorlinks,linkcolor={blue},citecolor={black},urlcolor={blue}}

\usepackage[shortlabels,inline]{enumitem}

\usepackage{xurl}
\usepackage{cite}

\setlist[itemize]{topsep=0pt, itemsep=0pt}

\usepackage[colorinlistoftodos,prependcaption,textsize=small]{todonotes}

\newcommand{\kevin}[1]{\todo[linecolor=blue,backgroundcolor=blue!25,bordercolor=blue]{KK: #1}}

\input{macros.tex}

\usepackage[capitalise]{cleveref}
\newcommand\myproof[1]
   {}

\newcommand\alb\allowbreak

\newcommand\compl\sqsubseteq
\newcommand\uncompl\sqsupseteq
\newcommand\complS\sqsubset  

\mathchardef\mhyphen="2D
\newcommand{\leftOut}[1]{}

\newbox\boxA

\newcommand{\oexp}{\mbox{\hphantom{$+_{\mathsf{o}}$}}\llap{$\text{\textasciicircum}_{\mathsf{o}}\kern.1em$}}

\definecolor{light-gray}{gray}{0.85}
\newcommand\highlight[1]{\mbox{\setlength{\fboxsep}{0pt}\colorbox{light-gray}{\strut$#1$}}}
\newcommand\hlt\highlight
\newcommand\hhighlight[1]{\mbox{\scriptsize \setlength{\fboxsep}{0pt}\colorbox{light-gray}{\strut$#1$}}}
\newcommand\hhlt\hhighlight

\newcommand{\cexp}{\mbox{\hphantom{$+_{\mathsf{o}}$}}\llap{$\text{\textasciicircum}_{\mathsf{c}}\kern.1em$}}

\newcommand\TC[1]{\mathsf{#1}}
\newcommand\CC[1]{\mathsf{#1}}

\newcommand\CHOPFROMUN{.25}
\newcommand\UN{{\setbox\boxA=\hbox{\_}\usebox\boxA\kern-\CHOPFROMUN\wd\boxA{\color{white}\vrule height 0ex depth .444ex width \CHOPFROMUN\wd\boxA}\kern-\CHOPFROMUN\wd\boxA}}

\newtheorem{mylemma}{Lemma}

\newtheorem{prop}[mylemma]{Prop}
\newtheorem{thmm}[mylemma]{Thm}
\newtheorem{defi}[mylemma]{Def}

\newtheorem{coro}[mylemma]{Corollary}

\newcommand\te\simeq

\newcommand{\sm}{\smallsetminus}

\renewcommand{\phi}{\varphi}

\renewcommand{\iff}{\allowbreak\mathrel{\;\leftarrow\nobreak\kern-1.6ex\rightarrow\;}\allowbreak} 

\newcommand{\la}{\leftarrow}

\newcommand{\ra}{\rightarrow}
\newcommand{\Ra}{\Rightarrow}
\newcommand{\lra}{\longrightarrow}

\def\implies{\lra} 

\newcommand{\llam}{{\llam}}

\newcommand\tab{\hspace*{3ex}}

\newcommand{\smobla}{{{\CC{smobla}}}}

\newcommand{\coverageTest}{{\CC{coverage\hspace*{-0.1ex}Test}}}

\newcommand{\decrease}{{\CC{decrease}}}

\newcommand{\Compat}{{{\CC{Compat}}}}

\newcommand{\Const}{{{\CC{C
}}}}

\newcommand{\Term}{{{\CC{Term}}}}
\newcommand{\FTerm}{{{\CC{FTerm}}}}

\newcommand{\CTerm}{{{\CC{CTerm}}}}

\newcommand{\arOf}{{{\CC{arOf}}}}

\newcommand{\ctpOf}{{{\CC{ctpOf}}}}

\newcommand{\mgen}{{{\CC{mgen}}}}

\newcommand{\mtpOf}{{{\CC{mtpOf}}}}

\newcommand{\tpOf}{{{\CC{tpOf}}}}

\newcommand{\choice}\varepsilon 

\newcommand{\Var}{\CC{{Var}}}

\newcommand{\Type}{\CC{{Type}}}

\newcommand{\TVar}{\CC{{TVar}}}

\newcommand{\KK}{{\TC{K}}}

\newcommand{\pickPos}{{\mathit{pickPos}}}

\newcommand{\Nat}{\mathbb{N}}

\newcommand{\bool}{\TC{bool}}
\newcommand{\nat}{\TC{nat}}

\newcommand{\listt}{\TC{list}}

\newmuskip\originalthinmuskip
\originalthinmuskip=\thinmuskip
\newmuskip\tinyGapMu
\renewcommand\ldots{\mathinner{.\mskip\originalthinmuskip .\mskip\originalthinmuskip .}}

\newcommand\Smash[1]{\kern-200mm\smash{#1}\kern-200mm}
\newcommand\SubItem[2]{\indent\hbox to \leftmargini{\hfill#1\enskip}#2}

\newcommand\XDot{\raise1ex\hbox{\Large.\kern.1em}}

\def\figurename{Fig\hbox{.}}

\newcommand{\er}{{{\mathsf{erase}}}}

\newcommand{\FTV}{{{\mathsf{FTV}}}}
\newcommand{\TV}{{{\mathsf{TV}}}}

\newcommand{\Poss}{{{\mathsf{Poss}}}}

\newcommand{\scmp}{\,\bullet\,}

\newif\ifwithappendix
\withappendixtrue

\crefname{section}{\S}{\S\S}
\Crefname{section}{\S}{\S\S}
\crefformat{section}{\S#2#1#3}
\Crefformat{section}{\S#2#1#3}
\makeatletter
\AddToHook{cmd/appendix/before}{%
  \crefalias{section}{appendix}%
  \crefalias{subsection}{appendix}}
\makeatother
\crefname{appendix}{App.}{App.}
\Crefname{appendix}{App.}{App.}

\begin{document}
\title{Just Type It in Isabelle! AI Agents Drafting, Mechanizing,~and~Generalizing~from~Human~Hints}
\titlerunning{Just Type It in Isabelle!}

\author{
Kevin Kappelmann\inst{1}\orcidlink{0000-0003-1421-6497} \and
Maximilian Schäffeler\inst{2}\orcidlink{0000-0002-2612-2335} \and
Lukas Stevens\inst{3}\orcidlink{0000-0003-0222-6858} \and
Mohammad~Abdulaziz\inst{2}\orcidlink{0000-0002-8244-518X} \and
Andrei Popescu\inst{1}\orcidlink{0000-0001-8747-0619}\and
Dmitriy Traytel\inst{3}\orcidlink{0000-0001-7982-2768}}
\authorrunning{Kappelmann et al.}

\institute{
Department of Computer Science, University of Sheffield, United Kingdom
\email{\{k.kappelmann,a.popescu\}@sheffield.ac.uk}\and
Department of Informatics, King's College London, United Kingdom
\email{\{maximilian.schaffeler,mohammad.abdulaziz\}@kcl.ac.uk}\and
Department of Computer Science, University of Copenhagen, Denmark
\email{\{lukas.stevens,traytel\}@di.ku.dk}}
\maketitle              
%
\begin{abstract}
Type annotations are essential when printing 
terms in a way that preserves their meaning under reparsing and type inference. We study the problem of complete and minimal type annotations for rank-one polymorphic $\lambda$-calculus terms, as used in Isabelle. Building on prior work by Smolka, Blanchette et al., we give a metatheoretical account of the problem, with a full formal specification and proofs, and formalize 
it in \isabellehol. Our development is a series of experiments 
featuring human-driven and AI-driven formalization workflows: a human and an LLM-powered AI agent independently produce pen-and-paper proofs, and the AI agent autoformalizes both in Isabelle, with further human-hinted 
AI interventions refining and generalizing the development.

\keywords{Printing-parsing roundtrip \and Minimal type annotation \and AI \and Large language model \and Autoformalization \and Isabelle \and Lambda calculus}
\end{abstract}
\section{Introduction}\label{sec:intro}
Some 30 years ago, Larry Paulson taught the working ML\footnote{Hereby, we refer to the programming language Standard ML, and its spiritual successors, implementations, and variants such as OCaml, Poly/ML and Isabelle/ML.} programmers how to pretty print~\cite{DBLP:books/daglib/0084777}. His textbook algorithm is mainly concerned with the printing layout for improved legibility, and has inspired the pretty printers of types, terms, and theorems used in the Isabelle proof assistant to this day.

When printing Isabelle's terms, or more generally terms of the typed lambda calculus (\cref{sec-synPrelim}), in addition to legibility, it is crucial to preserve the type information. Specifically, we are interested in the following round-trip property: starting from a fully typed term, we wish to print it in a way so that reparsing it and using type inference afterwards yields the original term. For example, printing the term $(c_\alpha,d_\alpha)$\footnote{We prefer the compact $t_\tau$ over Isabelle's $t :: \tau$ type annotation syntax.} with polymorphic constants $c$ and $d$ as $(c,d)$ violates this property: invoking type inference yields the more general typing $(c_\alpha,d_\beta)$.

Achieving the round-trip hence requires type annotations.
As legibility remains a desideratum, annotations should be as sparse as possible while still constraining type inference to recover the original term. Smolka, Blanchette et al.~\cite{\smoblacitations}\pagebreak[2] identified these requirements and formulated the type annotation problem in the context of Isar proof reconstruction for Sledgehammer~\cite{PaulsonB10}. They call the round-trip property \emph{completeness}, and
they are interested in a solution that is also \emph{minimal}.
We use completeness and minimality as the \emph{correct printing problem}'s
specification~(\cref{sec-annotate}).
Smolka, Blanchette et al.~\cite{\smoblacitations} also devised a solution to the correct printing problem,
indicating that full minimality (w.r.t.\ a function assigning costs to annotation positions)
is NP-hard, they settle for a
tractable greedy algorithm, producing a complete, \emph{locally minimal} solution.
Their implementation is part of the Isabelle distribution and has been reused in different contexts in which terms are printed: \texttt{sketch} and \texttt{explore}~\cite{sketch_and_explore}, \texttt{super\_sketch}~\cite{DBLP:conf/icse-formalise/TanDMW25}, AutoCorres2~\cite{DBLP:journals/afp/BrecknellGHIKKLNSSSTW24}, types-to-sets~\cite{DBLP:conf/cpp/Milehins22}, and Apply2Isar~\cite{binder2026apply2isarautomaticallyconvertingisabellehol}.

Despite widespread usage, the algorithm's implementation contained a com\-pleteness-compromising issue in one optimization. In the Isabelle2025-2 release, $(c_\alpha,d_\alpha)$ is printed as $(c,d)$---the type annotation is incorrectly dropped. In addition, it is difficult to convince Isabelle to actually print type annotations correctly when they are attached to constants and bound variables: $[]_{\nat\;\listt}$ gets printed as $[]$ and $\lambda x.\,x_\nat$ as the garbled \texttt{\_type\_constraint\_}. We resolved these issues by correcting the optimization, forbidding annotations on bound variables in favor of annotating binders,
and adding a custom print translation that preserves all annotations inserted by the algorithm. We also set up a testing environment for the algorithm using Mirabelle~\cite{DBLP:conf/itp/DesharnaisVBW22}. These implementation problems, however, also exposed a deeper issue: without a precise formal account of what the algorithm is supposed to guarantee, it is difficult to explain exactly why an optimization is unsound, to justify the repairs, or to rule out similar regressions.

Alas, this paper is not about our technical refinements of the implementation. Our main goal as working programming-language metatheorists is to understand the type annotation problem precisely, with formal statements and complete proofs. (Smolka, Blanchette, et al.~\cite{\smoblacitations} give only informal claims and no proofs.) At the same time, we also identify as working metatheory formalizers and thus want a proof assistant to validate our proofs. Being modern working metatheorists and formalizers, we would like to conduct both activities assisted by modern technology, particularly large language models (LLMs), thus corroborating numerous recent accounts of their usefulness in our field.

To this end, we conduct the following case studies contrasting human-driven and AI-driven activities, starting from Smolka, Blanchette, et al.'s informal description, the Isabelle/ML implementation, and an example algorithm execution:
\begin{enumerate}
\item A human expert formalizes the type annotation problem's metatheory on paper in \cref{sec-synPrelim,sec-annotate,sec:human_paper_proof}, an occasion to describe the problem in detail.
\item Independently, an LLM-powered AI agent formalizes the 
problem's metatheory on paper. We discuss the experimental setup and how the result differs from the human pen-and-paper formalization (\cref{sec:llm_paper_proof}). Drafts were iteratively refined through human reviews. Some feedback also went in the other direction: AI's work has informed the human formalizer to prove local minimality.
\item A second AI agent (disconnected from the first) formalizes\pagebreak[2] both pen-and-paper proofs 
 in \isabellehol. Both resulting autoformalizations
 use mostly the same setup; we discuss differences in the experience (\cref{sec:autoformalization}).
\item A human expert intervenes with the AI's pen-and-paper formalization,
observing that a core part of the algorithm could be generalized by reducing it to a standard problem, for which a rich theory as well as off-the-shelf algorithms exist. The AI agent analyses the 
human hint, shows the reduction to be correct, and updates
the Isabelle formalization to do the same (\cref{sec:independence_system}).
\end{enumerate}

To our own surprise, \emph{all} case studies were successful after a small number of human reviews.
We obtained three Isabelle formalizations of the desired results without writing a single line of Isabelle code ourselves.
As our LLM, we used \claude 4.6, a state-of-the-art general purpose model with strong reasoning capabilities.
Claude proved a capable, although not logically infallible, companion---one that can substantially amplify the work of human experts.

In contrast to prior work in autoformalization (\cref{sec-related}),
which typically starts from pen-and-paper proofs and/or targets classic mathematics,
we begin with an algorithm that lacks formal correctness specifications.
Another 
contribution is autoformalization of programming language metatheory,
which has been largely unexplored.
Finally, we are among the first to report on a comprehensive autoformalization in \isabelle and provide a simple, reusable setup to the community.
Our experimental data~\cite{kappelmann_2026_19406624} and an extended version~\cite{kappelmann2026justtypeisabelleai} are available online.



\section{Syntactic Preliminaries}
\label{sec-synPrelim}

Next we introduce the necessary background: the syntax of types and terms, and the
 typing relation for rank-one polymorphic $\lambda$-calculus. Our presentation is tailored to the forthcoming discussion of the Smolka-Blanchette type annotation removal algorithm, in that our notion of terms is more general than usual, allowing annotations at any position in the term's abstract syntax tree.
The material in \cref{sec-synPrelim,sec-annotate,sec:human_paper_proof}
was \emph{not} used for the AI pen-and-paper proofs (\cref{sec:llm_paper_proof}).

\subsection{Types}
\label{subsec-types}

For the entire paper,
we fix an 
infinite set $\TVar$ of {\em type variables} (\emph{tyvars} for short), ranged over by $\alpha,\beta$, and an
infinite set $\Var$ of {\em term variables} (\emph{vars} for short), ranged over by $x,y,z$.
We also fix a
{\em type structure}, \ie a pair $(\KK,\arOf)$ where
	$\KK$, ranged over by $\kappa$,  is the set of {\em type constructors}, and
	$\arOf : \KK \ra \Nat$, is a function associating
	\emph{arities} to the type constructors.

The {\em types}, 
ranged over
by $\sigma,\tau$, forming the set $\Type$, are given
by the
grammar
$ 
\sigma \;::= \; \alpha \,\mid\, \sigma \Ra \tau \,\mid\,  (\sigma_1,\ldots,\sigma_{\arOf(\kappa)})\,\kappa
$. 
Thus, a type is 
a tyvar, or the function type constructor $\Ra$ applied to two types,
or another 
type constructor $\kappa$ postfix-applied to a number of types 
matching its arity.


Finally, we fix a
{\em signature} for the type structure $(\KK,\arOf)$, \ie a pair $\Sigma = (\Const,\ctpOf)$, where:
$\Const$, ranged over by $c$, is a set of symbols called {\em constants}, and
$\ctpOf : \Const \ra \Type$ is a function associating a type to every constant.

A \emph{(type) substitution} is a function $\rho : \TVar \ra \Type$ with finite support, in that $\rho(\alpha) \not= \alpha$ for all but a finite set of tyvars. For a substitution $\rho$ and a type $\sigma$,
$\sigma[\rho]$ denotes the application of $\rho$ to $\sigma$.
For a type $\sigma$, $\TV(\sigma)$ denotes the set of its (occurring) tyvars.
We say that $\sigma$ is an \emph{instance} of $\tau$ via a substitution $\rho$, written $\sigma \leq_\rho \tau$, when
	$\sigma = \tau[\rho]$; and that $\sigma$ is an \emph{instance} of $\tau$, or that
	$\tau$ is \emph{more general} than $\sigma$,
	written $\sigma \leq \tau$, when there exists $\rho$ such that $\sigma \leq_\rho \tau$.
	%
For two substitutions $\rho$ and $\rho'$, their \emph{composition} $\rho \scmp \rho'$ is defined by $(\rho \scmp \rho')(\alpha) = \rho'(\alpha)[\rho]$.
We write $\sigma/\alpha$ for the substitution that takes $\alpha$ to $\sigma$ and any other tyvar to itself,
%
and $\rho[\alpha \la \sigma]$ for the
substitution obtained by updating $\rho$ at $\alpha$ with $\sigma$.

To model (partial) type-annotations, we will work with elements of the set $\Type_\bot = \Type \cup \{\bot\}$ (where $\bot\notin \Type$), which we call \emph{maybe-types}, and let $\xi,\zeta$ range over them. The operators $\TV$ and $\_[\_]$ are extended from $\Type$ to $\Type_\bot$ 
by defining $\TV(\bot) = \emptyset$ and $\bot[\rho] = \bot$.

\subsection{Terms}
\label{subsec-terms}

The
\emph{partially typed terms}  (which we simply call \emph{terms}),
ranged over by $s,t$, forming the set $\Term$,
are given by the grammar
$t \;::=\; x_\xi \,\mid\,  c_\xi \,\mid\, (t_1\,t_2)_\xi \,\mid\,  (\lambda x_\xi.\,t)_\zeta$.
We call $x_\xi$ a \emph{(maybe-)typed variable} and $c_\xi$ a \emph{constant instance}.
Thus, a term is either
\begin{enumerate*}[(1)]
\item a typed variable,
\item a constant instance,
\item an application decorated with a maybe-type, or
\item a $\lambda$-abstraction of a typed variable, decorated with a maybe-type.
\end{enumerate*}
%
Our terms are ``decorated'' with maybe-types on any occurring variable or constant, as well as at any subterm.
We think of
an actual type decoration (when the maybe-type is a type) as a \emph{type annotation}, and of
a $\bot$ decoration as the absence of a type annotation.
Therefore, in examples we will sometimes omit some $\bot$ decorations, 
thus writing, \eg
$\lambda x_\alpha.\;y_\beta$ instead of $(\lambda x_\alpha.\;y_\beta)_\bot$.

Our 
term decoration scheme
captures the process of type inference---of which we think of as ``completing'' a term such as $\lambda x_{\alpha\Ra\beta \Ra \gamma}.\;x\;c$
to a fully annotated term such as $(\lambda x_{\alpha\Ra\beta \Ra \gamma}.\,(x_{\hhlt{\alpha\Ra\beta \Ra \gamma}}\,c_{\hhlt{\alpha}})_{\hhlt{\beta\Ra\gamma}})_{\,\hhlt{(\alpha\Ra\beta \Ra \gamma) \Ra \beta\Ra\gamma}}$, where the highlighted types have been inferred.
%
Additionally, this scheme allows for a smooth presentation of the algorithm that we will study,
which, starting from a fully annotated term, repeatedly removes redundant annotations.

We let $\TV(t)$ be the set of tyvars occurring in $t$, \ie occurring in all the type annotations from $t$;
and $t[\rho]$ be the application of the substitution $\rho$ to $t$, which is defined by applying $\rho$ to (all the types occurring in) $t$.  We let $\FTV(t)$ be the set of free typed variables occurring in $t$; e.g., 
$\FTV(\lambda x_\sigma.\;y_{\sigma \Ra \tau}\,x_\sigma) = \{y_{\sigma \Ra \tau}\}$.

We extend the ``instance of'' relation from types to terms: $t \leq_\rho s$ is defined as $t = s[\rho]$, and $t \leq s$ is defined as the existence of $\rho$ such that $t \leq_\rho s$; in this case, we say $t$ is an \emph{instance} of $s$, or that $s$ is \emph{more general} than $t$.
\leftOut{
We say that $t$ and $s$ are \emph{equally general}, written $s \equiv t$, when $s \leq t$ and $t \leq s$.
(Thus, $\equiv$ is the equivalence standardly generated by the preorder $\leq$.)
We also write $s < t$ for the notion of being \emph{strictly more general}, defined by $s < t$ iff $s \leq t$ and $s \not\equiv t$. (Note that this is not the same as $s \leq t$ and $s \not= t$; this later version would not properly capture the intended concept.)
} 

A term will be called:  
\begin{itemize}
	\item \emph{unambiguous}, if its binding variable occurrences are non-repetitive,
	in that there exists no variable $x$ and (possibly equal) maybe-types $\xi$ and $\zeta$ such that
	$\lambda x_\xi$ and $\lambda x_\zeta$ occur at two different positions in a term;
	\item a \emph{fully typed term} (\emph{F-term}), when all its occurring maybe-types are types;
	\item a \emph{Church-typed term} (\emph{C-term}), when
	(1) all maybe-types annotating its application and abstraction subterms are $\bot$, and
	(2) all maybe-types annotating its constants and variables, as well as its binding variables, are types.
\end{itemize}
The above concepts have straightforward inductive definitions. 
(We focus on unambiguous terms in order to ensure correct behavior of type substitution.) 

We let $\FTerm$, 
and $\CTerm$
denote the subsets of $\Term$ consisting of the F-terms 
and C-terms,
respectively.
We let $u,v$ range over F-terms.

A \emph{position} is a list of numbers in $\{1,2\}$,
identifying the location of a subterm or a binding variable in a term via its unique path.
$\Poss(t)$ denotes the set of positions of term $t$.
If $p \in \Poss(t)$, 
we write $\mtpOf(t,p)$ for the maybe-type decoration at that position.
\ifwithappendix
(Details in \cref{app-detailsSyntax}.)
\fi
For example, if $t$ is $(\lambda x_\nat.\;(f_\bot\,x_\bot)_{\bool})_\bot$, then
	$\Poss(t) = \{[],[1],[2],[2,1],[2,2]\}$;
	$\mtpOf(t,[]) = \bot$,
	$\mtpOf(t,[1]) = \nat$, $\mtpOf(t,[2]) = \bool$,
	and $\mtpOf(t,[2,1]) = \mtpOf(t,[2,2]) = \bot$.
%
We write $\mtpOf(t)$ instead $\mtpOf(t,[])$. 
For F-terms $u$ we write $\tpOf(u,p)$  and $\tpOf(u)$
instead of $\mtpOf(u,p)$ and $\mtpOf(u)$ (since we know that this is a type).

On $\Type_\bot$, we define the relation $\compl$ by $\xi \compl \zeta$ iff
$\xi \in \{\bot,\zeta\}$.  This order is extended to $\Term$, by defining the \emph{annotation subsumption} relation
$t \compl s$ to mean that $s$ is obtained from $t$ by adding zero or more type annotations: 
\begin{gather*}
	\infer{x_\xi \compl x_\zeta}
	{\xi \compl \zeta}
	\qquad
	\infer
	{c_\xi \compl c_\zeta}
	{\xi \compl \zeta}
	\qquad
	\infer{(s\;t)_\xi \compl (s'\;t')_{\xi'}}
	{s \compl s' & t \compl t' &  \xi \compl \xi' }
	\qquad
	\infer
	{(\lambda x_\zeta.\;t)_\xi \compl (\lambda x_{\zeta'}.\;t')_{\xi'}}
	{\xi \compl \xi' & t \compl t' & \zeta \compl \zeta'}
\end{gather*}
%
\leftOut{
We write $\complS$ for the strict version of $\compl$, defined by $s \complS t$ iff
$s \compl t$ and $s \not= t$.
}

For a term $t$ and position $p\in \Poss(t)$, let $t[p:= \bot]$ denote the term obtained from
$t$ by erasing
the type annotation at $p$ (\ie turning its decoration into $\bot$).
For a term $t$, the 
(completely unannotated) term $\er(t)$ denotes the term obtained from $t$ by erasing \emph{all} its type annotations. 
Note that $\er(t) \compl t[p:= \bot] \compl t$.

The \emph{well-typedness} predicate $\vdash$ is defined on $\FTerm$ by the following rules:
%
\begin{gather*}
\infer{\vdash x_\sigma}{}
\qquad \qquad \qquad \qquad \qquad
\infer{\vdash c_\sigma}{\sigma \leq \ctpOf(c)}
\\
\infer{\vdash (u\,v)_\sigma}{\vdash u & \vdash v & \tpOf(u) = \tpOf(v) \Ra \sigma}
\qquad
\infer{\vdash (\lambda x_\sigma.\,u)_{\sigma\Ra \tpOf(u)}}{\vdash u & \forall x_\tau \in \FTV(u).\; \tau = \sigma}
\end{gather*}

Note that the condition $\forall x_\tau \in \FTV(u).\; \tau = \sigma$
guarantees that terms with type-incompatible bindings are ill-typed,
such as $\lambda x_\alpha.\;x_\beta$. Alternatively, this would also be achieved
by the explicit consideration of typing contexts.

An F-term $u$ is said to be a \emph{well-typed completion} of a term $t$,
provided  $t \compl u$ and $\vdash u$.
We call a term $t$ \emph{typable} when it has a well-typed completion.
Note that for F-terms, typability is equivalent to well-typedness.

\leftOut{
\begin{remark}
\begin{enumerate}[(1)]
\item Type inference for a C-term amounts to finding a well-typed completion of it.\kevin{that's true for all terms though}
\item Curry-style\kevin{I don't know what that is.} type inference for a U-term amounts to finding a most general well-typed completion of it.
\item We can model Isabelle terms as unambiguous C-terms. (Indeed, we can use fresh variables for abstractions, which makes the term unambiguous.)
\item In general, we do not want to allow ambiguous terms such as $\lambda x_\nat.\;\lambda x_\bool.\;x_\bot$ because one cannot infer a most general type for them. In all proofs, we will implicitly use the fact that unambiguity is hereditary:  if a term is unambiguous, then so are its subterms.
\item As a matter of ``formal hygiene'', even though we define the substitution and type-variable operators,  $\_[\_]$ and $\TV$, on arbitrary terms (and prove basic properties of them also on arbitrary terms), for the critical results we only use these, as well as the generality relation $\leq$, on F-terms only.
\end{enumerate}
\end{remark}
}


In our formalism, the process of type inference for a (partially annotated) term $t$ means producing a most general well-typed completion for $t$.
We will take for granted the classic Damas-Hindley-Milner type inference result \cite{hindley1969principal, milner1978theory, damas1985type,damas-milner:1982}, which we only need in existential form:
%
\begin{thmm}\rm   \label{prop-assumed} 	
	Assume that $t$ is a typable unambiguous term.
	Then there exists a most general
	well-typed completion of $t$ (i.e., a $\leq$-maximal
	F-term among the F-terms $u$ such that
	$t \compl u$ and $\vdash u$).
\end{thmm}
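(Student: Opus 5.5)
The plan is the standard unification-based route: reduce finding a most general well-typed completion of $t$ to computing a most general unifier of a finite set of type equations, and invoke Robinson's theorem that every unifiable finite set of equations has an mgu.

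\emph{Step 1: the skeleton.} Since $t$ is unambiguous, each bound name $x$ has a unique binder $\lambda x_\zeta$. We assign a fresh tyvar $\alpha_p$ to every position $p\in\Poss(t)$. We also assign one fresh tyvar $\beta_y$ to each free variable name $y$ that occurs with $\bot$ decorations. This yields an F-term $\hat t$ with $\er(\hat t)=\er(t)$.

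\emph{Step 2: the constraint set $E$.} We collect the following equations.
\begin{itemize}
\item $\alpha_p=\mtpOf(t,p)$ whenever $\mtpOf(t,p)\neq\bot$.
\item $\alpha_p=\alpha_{p\cdot1}\Ra\alpha_p'$ where $\alpha_p'$ is the type of the result at application and abstraction nodes, following the typing rules.
\item For a bound occurrence of $x$ at $q$ under binder position $b$: $\alpha_q=\alpha_b$. Likewise all free occurrences of the same free name are equated.
\item At a constant $c$ at position $p$: $\alpha_p=\ctpOf(c)[\pi_p]$, where $\pi_p$ renames $\TV(\ctpOf(c))$ to fresh tyvars.
\end{itemize}
The last item encodes the side condition $\sigma\le\ctpOf(c)$.

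\emph{Step 3: a correspondence lemma.} This is the main obstacle. For a substitution $\rho$, if $\rho$ unifies $E$, then $\hat t[\rho]$ is a well-typed completion of $t$. Conversely, every well-typed completion $u$ of $t$ equals $\hat t[\rho]$ for some unifier $\rho$ of $E$. The forward direction is an induction on $t$. It checks the typing rules, and the binder side condition $\forall x_\tau\in\FTV(u).\,\tau=\sigma$ is enforced by the occurrence equations. Unambiguity is used here: each free occurrence of $x$ in the body of $\lambda x$ is bound by exactly that binder, so no occurrence is captured by a different binder with the same name. For the converse, we read $\rho(\alpha_p):=\tpOf(u,p)$ and set $\rho(\beta_y)$ from $u$. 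We choose $\rho$ on the fresh constant tyvars $\pi_p$ to witness $\tpOf(u,p)\le\ctpOf(c)$. Unambiguity again gives a consistent choice for the variable equations. Annotation subsumption $t\compl u$ gives the annotation equations. A subtle point is that fixed annotations mention tyvars of $t$ that $\rho$ must also act on. So $\rho$ must be allowed to move those too, and $u=\hat t[\rho]$ only on the skeleton positions. This is fine, because $\hat t$ is fully made of the fresh $\alpha_p$.

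\emph{Step 4: conclusion.} Typability of $t$ gives some completion, hence by Step 3 a unifier of $E$. By Robinson's theorem, $E$ has an mgu $\mu$. Set $u_0:=\hat t[\mu]$, which is a well-typed completion by Step 3. For any completion $u=\hat t[\rho]$, we have $\rho=\mu\scmp\rho'$ for some $\rho'$, so $u=u_0[\rho']\le u_0$. Hence $u_0$ is maximal (indeed greatest) with respect to $\leq$. Composition and substitution into terms commute as defined, so this last step is routine.
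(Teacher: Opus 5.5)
The paper gives no proof of this statement. It is taken for granted from the Damas--Hindley--Milner literature, and it appears as a locale assumption in the Isabelle development. So your reduction to first-order unification adds something the paper lacks, and it is the right kind of argument. As written, however, your constraint set $E$ does not match the paper's typing relation, and Step~3 fails in both directions.

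\textbf{Free variables.} The paper's rules impose no relation between distinct free occurrences of the same name. The rule $\vdash x_\sigma$ holds for every $\sigma$, free typed variables are name--type pairs, and only the abstraction rule constrains occurrences of its own bound variable. Your equations identifying all free occurrences of a name (via $\beta_y$) therefore over-constrain.

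For $t=(f_\bot\,f_\bot)_\bot$, the term $(f_{\alpha\Ra\beta}\,f_\alpha)_\beta$ is a well-typed completion. Yet $E$ contains both $\alpha_{[1]}=\alpha_{[2]}\Ra\alpha_{[]}$ and $\alpha_{[1]}=\alpha_{[2]}$, and so fails the occurs check. Hence ``typability gives a unifier'' in Step~4 is false.

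Even when $E$ is solvable, the outcome can be too specific. Take $\ctpOf(c)=\alpha\Ra\beta\Ra\gamma$ and $t=((c_\bot\,f_\bot)_\bot\,f_\bot)_\bot$. Your $u_0$ forces both occurrences of $f$ to one type, so it is strictly less general than the completion giving them distinct tyvars. The remedy is to drop these equations and let each free occurrence keep its own $\alpha_p$.

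\textbf{Annotation tyvars.} The annotation equations $\alpha_p=\mtpOf(t,p)$ mention tyvars of $t$, while $t\compl u$ demands $\tpOf(u,p)=\mtpOf(t,p)$ literally. A unifier that moves $\TV(t)$ therefore need not yield a completion. For $t=x_\beta$, the substitution sending both $\alpha_{[]}$ and $\beta$ to $\nat$ unifies $E$, but $\hat t[\rho]=x_\nat$ is not above $x_\beta$. This breaks the forward direction of Step~3.

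It also breaks Step~4, because Robinson's theorem only hands you \emph{some} mgu. For this $E$, the substitution $\beta\mapsto\alpha_{[]}$ is an mgu, and it gives $u_0=x_{\alpha_{[]}}$, which is not a completion of $t$ at all.

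Your remark that $\rho$ ``must be allowed to move those too'' points the wrong way. The converse direction naturally produces a $\rho$ that is the identity on $\TV(t)$, and moving those tyvars is exactly what causes both failures. There are two ways to repair this:
\begin{itemize}
\item Treat the tyvars in $\TV(t)$ as rigid (nullary type constructors) and apply Robinson's theorem over that extended signature. This is the same device the paper uses to eliminate typing contexts.
\item Alternatively, argue that because some unifier fixes $\TV(t)$, every mgu restricts to an injective renaming on $\TV(t)$. Composing it with a tyvar permutation that undoes this renaming gives an mgu fixing $\TV(t)$.
\end{itemize}
With both repairs the argument goes through.

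\textbf{Two smaller slips.} The application constraint should read $\alpha_{p\cdot1}=\alpha_{p\cdot2}\Ra\alpha_p$. Under the paper's convention $t[\rho\scmp\rho']=t[\rho'][\rho]$, the factorization you need is $\rho=\rho'\scmp\mu$, not $\mu\scmp\rho'$.
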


We let $\mgen(t)$ denote a choice (unique up to tyvar renaming) of a most general well-typed completion of $t$.

\section{The Problem of Correct Printing and the Smolka-Blanchette Algorithm}
\label{sec-annotate}

The terms used internally by Isabelle are captured by our notion of unambiguous C-terms.
A crucial property of these terms is that they contain enough information in order for their inferred type to be actually unique:

\begin{prop}\rm   \label{prop-Cterm-uniqueCompleteness}
	Assume that $t$ is a typable unambiguous C-term.
	Then there is only one well-typed completion of $t$ (hence only one choice for $\mgen(t)$).
\end{prop}

We can now define a correct printing of an Isabelle term to be another term (not necessarily a C-term)  that has the same type-free content and allows for the same unique most general typing as the original:

\begin{defi}\rm \label{defi-correctPrint}
	Given a typable unambiguous C-term $t$,
	a \emph{correct printing} of $t$ is a term $s$ such that $\mgen(t)$
	is the unique most general well-typed completion of $s$.
\end{defi}

We are interested in a correct printing that is minimal w.r.t.\ the annotation subsumption relation, $\compl$. This is also the problem that Smolka and Blanchette have considered, leading to their  implementation which is currently part of Isabelle.  They have not formulated the definition of correct printing rigorously, but focused on more practical, operational matters. Namely, they
described a criterion for deeming an annotation redundant,
provided an implementation that starts with a fully annotated term and repeatedly removes positions deemed redundant for as long as possible following a reverse greedy pattern, and
informally claimed minimality. Next, we discuss a simplified version of their algorithm and a detailed informal proof that it indeed achieves a minimal correct printing. 

The difference between what we will describe and what Smolka and Blanchette
actually
implemented
is threefold: (1) we abstract away from the specific cost function that they employ when deciding the annotation at which of the eligible positions to remove, (2) we ignore a number of small optimizations,   
and (3) we ignore typing contexts altogether. Concerning (2), in future work we plan to extend our formal development (and our partnership with LLMs) to cover these optimizations as well, and possibly to even go beyond them together with a further sharpening of the implementation.
Concerning (3), our motivation for avoiding contexts is that they are completely uninteresting
for the correct printing problem---in that the tyvars (from $\TV(\sigma_i)$) and variables ($x_i$) from
a typing context
$\Gamma = x_1:\sigma_1,\ldots,x_n:\sigma_n$
would be treated exactly as if they were part of the signature $\Sigma$, namely as nullary type constructors and 
constants; so the problem can be safely reduced to printing closed terms in the empty context.
%
\leftOut{
One important caveat we should mention here: While we do not pose this restriction when stating our results in this section, the final theorems (on correctness and minimality of printing), while true for arbitrary terms, are only meaningful when restricted to closed terms.  The correct extension of these results to open terms should only be considered along the aforementioned route, namely extending the signature to incorporate $\Gamma$.
}


We express the Smolka-Blanchette algorithm (subject to the simplifications discussed above) by the function
$\smobla$ which we define below, after defining the auxiliary functions $\coverageTest$ and $\decrease$.

The ternary predicate $\coverageTest$ acting on F-terms $v \in \FTerm$, terms $s \in \Term$ and positions $p \in \{1,2\}^*$,
is defined by
\begin{center} \begin{tabular}{l}
		$\coverageTest(v,s,p) \;=$
		\\\tab$p \in \Poss(s) \cap \Poss(v) \wedge \mtpOf(s,p) \not= \bot \;\wedge $
		\\\tab$(\forall \alpha \in \TV(\tpOf(v,p)).\;$
		\\\hspace*{10ex}$\exists q \in \Poss(s) \sm \{p\}.\;\alpha \in \TV(\tpOf(v,q)) \wedge \mtpOf(s,q) \not= \bot)$
\end{tabular} \end{center}
As will be seen from the upcoming definitions,
$\coverageTest$ will be called with $v$ being $\mgen(\er(t))$, the most general well-typed completion of the erasure of the original term $t$, and $s$ being the current term obtained by repeated removal of position annotations from $\mgen(t)$; so we will always have
$s \compl v$, which (as we discuss in \cref{subsec-proofDevel}) will ensure $ \Poss(s) = \Poss(v)$ and $\mtpOf(s,q) \compl \tpOf(v,q)$. The predicate checks whether a position $p$ on the one hand has an annotation within $s$, and on the other hand has all the tyvars occurring in $v$ at that position
 covered by another annotation in $s$, in the sense that there is another position $q$ that has an annotation within $s$ and $\alpha$ occurs in $v$ at $q$.

We say that a function
$\pickPos : \FTerm \times \Term \ra \{1,2\}^*$ is $\coverageTest$-compatible
when, for all $v\in \FTerm$ and $s\in\Term$, if $\exists p.\;\coverageTest(v,s,p)$ then
$\coverageTest(v,s,\pickPos(v,s))$. Thus, compatibility means being a correct choice function
for the position argument of $\coverageTest$.
We let $\Compat$ denote the set of $\coverageTest$-compatible functions.

Now, the function $\decrease : \Compat \times \FTerm \times \Term \ra \Term $ is defined by
\begin{center} \begin{tabular}{l}
		$\decrease(\pickPos,v,s) \;= $
		$\left\{\begin{array}{ll}
			\decrease(\pickPos,v,s[(\pickPos(v,s)):= \bot]), &
			\\ \mbox{\hspace*{5ex}if $\exists p.\;\coverageTest(v,s,p)$} &
			\\
			s, \mbox{ otherwise}
		\end{array} \right.$
\end{tabular} \end{center}


Thus, $\decrease$ keeps removing annotations from $s$ at positions chosen using $\pickPos$ provided they pass the coverage test.
Finally, the function $\smobla : \Compat \times \Term \ra \Term $ is defined by calling $\decrease$ on $v = \mgen(\er(t))$ (the fixed provider of tyvars that must stay covered) and $s = \mgen(t)$ (the changing term from which annotations keep being removed, initially set to the fully annotated term $\mgen(t)$).
\begin{center} \begin{tabular}{l}
		$\smobla(\pickPos,t) = \decrease(\pickPos,\,\mgen(\er(t)),\,\mgen(t))$
\end{tabular} \end{center}

In the above algorithm, we can distinguish two components.
First, there is the general-purpose reverse greedy component: The algorithm
	starts with a fully annotated term $\mgen(t)$ and, via its $\decrease$ function, keeps removing annotations for positions $p$ satisfying a given test, $\coverageTest$, for as long as such annotation carrying positions exist (and the choice of the exact position to remove is regulated by a $\coverageTest$-compatible parameter function $\pickPos$).

Then, there is  the ad hoc component, given by the definition of $\coverageTest$, which is the heart of the algorithm. Intuitively, we are interested in removing annotations at positions for as long as we do not lose information about the (unique) typing of the term.  The
algorithm does not pursue
this (clear yet non-effective) intuition directly, but in a roundabout manner: It first erases all type annotations from $t$ and builds a most general well-typed completion for the erased term, $v = \mgen(\er(t))$. Then it proceeds based on the following crucial observation: The most general well-typed completion of the original term, $s = \mgen(t)$, is less than or equally general as $v$,
which means that the tyvars of $v$ correspond position-wise
to specific subterms of $s$.
%
%
Protection against loss of typing information is offered through the test $\coverageTest$,
which makes sure that any $\alpha\in \TV(v)$ is still ``covered'' by an annotation in $s$ at a corresponding position, in that there is a position $p$ such that (1) $\alpha$ appears in the type annotation at that position in $v$, and (2) $p$ has a type annotation in $s$ as well.
Then the fact that the algorithm achieves
(i) correct printing that is also (ii) minimal (which is the subject of our proof development)
essentially amounts to the fact that this test
(i) guarantees the desired preservation of typing information and (ii) nothing beyond that.

For brevity, our presentation of the algorithm did not formally separate the generic from the ad hoc  component. However, for the sake of conceptual cleanness and reusability, a mechanization should ideally do that. In \cref{sec:independence_system} we report on how an AI agent has performed this separation based on a minimal prompt.

\section{Human-Authored Paper Proof Development}\label{sec:human_paper_proof} 

In this section we give a rigorous pen-and-paper description of
our results about the type annotation problem,
which we also call \emph{minimal correct printing problem},
starting with the statements
(\cref{subsec-statementsMainRes}) and then delving into the proofs (\cref{subsec-proofDevel}).
We aim to provide enough detail for the reader to form an informed view of the statement and proof complexity involved in what we consider a fairly direct ``no-nonsense'' solution, and thereby to appreciate the extent of the creative effort an AI agent would require when starting from first principles.

\subsection{The statements of the main results}
\label{subsec-statementsMainRes}

Our first goal will be to prove that the Smolka-Blanchette algorithm returns
a correct printing of the original term (something that Smolka and Blanchette refer to as ``completeness''):

\begin{thmm}[Completeness]\rm  \label{thm-correctPrintingHuman}
	If $t$ is a typable unambiguous C-term and $\pickPos \in \Compat$,
	then $\smobla(\pickPos,t)$ is a correct printing for $t$.
\end{thmm}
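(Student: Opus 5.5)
The plan is to prove an invariant for $\decrease$ and then instantiate it at the initial call. Let $u=\mgen(t)$ and $v=\mgen(\er(t))$. Since $\er(t)\compl u$ and $\vdash u$, $u$ is a well-typed completion of $\er(t)$. Hence $u\leq v$, say $u = v[\rho]$. Throughout, positions of $u$, $v$ and of every $s$ with $s\compl u$ coincide, since $\compl$ preserves shape.

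The invariant $I(s)$ on the current term $s$ has three parts:
\begin{itemize}
\item $s\compl u$;
\item every $\alpha\in\TV(v)$ is covered, i.e.\ there is $q$ with $\mtpOf(s,q)\neq\bot$ and $\alpha\in\TV(\tpOf(v,q))$;
\item $u$ is the unique most general well-typed completion of $s$.
\end{itemize}
It holds initially for $s=u$. All annotations are present, and every tyvar of $v$ occurs at some position of $v$. By Prop.~\ref{prop-Cterm-uniqueCompleteness} (or trivially, since $u$ is an F-term), the only completion of $u$ is $u$ itself.

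The first step is to show that the three-part invariant is equivalent to a simpler one. The key lemma is: \emph{if $s\compl u$ and every $\alpha\in\TV(v)$ is covered in $s$, then $u$ is the unique most general well-typed completion of $s$.} Let $w$ be any well-typed completion of $s$. Erasing gives $\er(t)\compl w$, so $w$ is a completion of $\er(t)$ and $w=v[\rho']$ for some $\rho'$; by unambiguity positions align. Now take $\alpha\in\TV(v)$ and a covering position $q$. There $\tpOf(w,q)=\mtpOf(s,q)=\tpOf(u,q)$, i.e.\ $\tpOf(v,q)[\rho']=\tpOf(v,q)[\rho]$, so $\rho'(\alpha)=\rho(\alpha)$. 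Since every tyvar of $v$ is covered, $\rho$ and $\rho'$ agree on $\TV(v)$, and therefore $w=u$. So $u$ is the \emph{only} well-typed completion of $s$, and it is trivially the unique most general one up to the choice of $\mgen$.

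A subtlety is that ``$w\leq v$'' needs $v$ most general among completions of $\er(t)$ in the sense that every completion is an instance. I will read Theorem~\ref{prop-assumed}'s ``$\leq$-maximal'' as principality (greatest), which is the Damas--Hindley--Milner content.

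The second step is preservation of coverage. If $\coverageTest(v,s,p)$ holds and $s'=s[p:=\bot]$, then $s'\compl s\compl u$. Take $\alpha\in\TV(v)$ with covering position $q$. If $q\neq p$, then $q$ still covers $\alpha$ in $s'$. If $q=p$, then $\alpha\in\TV(\tpOf(v,p))$, and the coverage test supplies $q'\neq p$ with $\mtpOf(s,q')\neq\bot$ and $\alpha\in\TV(\tpOf(v,q'))$; this annotation is untouched in $s'$. Compatibility of $\pickPos$ guarantees that the removed position is one passing the test.

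The third step is termination and the induction itself. Each recursive call of $\decrease$ strictly reduces the finite number of non-$\bot$ annotations, so we can induct on that number. This shows $I$ holds of $\decrease(\pickPos,v,u)=\smobla(\pickPos,t)$, which is exactly Def.~\ref{defi-correctPrint}.

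I expect the main obstacle to be the key lemma, specifically justifying that every well-typed completion $w$ of $s$ is a substitution instance of $v$ with aligned positions. This needs the principality reading of $\mgen$ and the fact that $\compl$ preserves positions. A second concern is tyvars of $v$ occurring only in binder annotations, which must count as positions; the definition of $\Poss$ includes binder positions, so this should be fine.
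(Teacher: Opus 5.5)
Your proposal is correct and is essentially the paper's proof. The paper likewise shows by induction on $|\Poss_{\not=\bot}(s')|$ that $\decrease$ preserves the invariant ``$\er(t) \compl s' \compl \mgen(t)$ and every $\alpha\in\TV(v)$ is covered'', then concludes via the sandwich Lemma~\ref{lem-crucial} (which is exactly your key-lemma argument via converse substitution extensionality) and the $(\compl,\leq)$-lemma; that lemma relies on the same principality reading of $\mgen$ you flagged. The only differences are cosmetic: you derive $\er(t)\compl s$ from $s \compl \mgen(t)$ rather than carrying it in the invariant, and the third component of your invariant is redundant, as you yourself note.
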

Then we will prove that what is being returned is a minimal such correct printing:

\begin{thmm}[Minimality]\rm  \label{thm-correctPrintingMinimalHuman}
	If $t$ is a typable unambiguous C-term and $\pickPos \alb \in \Compat$,
	then $\smobla(\pickPos,t)$ is $\compl$-minimal among the correct printings for $t$.
\end{thmm}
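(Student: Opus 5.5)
The plan is to reduce minimality to a local, one-position statement, and then refute that statement at every position that survives the algorithm. Let $s = \smobla(\pickPos,t) = \decrease(\pickPos,v,\mgen(t))$ with $v = \mgen(\er(t))$. We must show that every correct printing $s'$ with $s' \compl s$ equals $s$. I read \cref{defi-correctPrint} literally: $\mgen(t)$ must be the \emph{unique} $\leq$-maximal well-typed completion of $s$, so a distinct but renaming-equivalent completion already breaks correctness. The argument splits into two steps.

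\textbf{Step 1: correct printings are upward closed under $\compl$.} Suppose $s' \compl s'' \compl \mgen(t)$ and $s'$ is a correct printing. Every completion of $s''$ is also a completion of $s'$, hence is $\leq \mgen(t)$. Moreover $\mgen(t)$ is itself a completion of $s''$. So $\mgen(t)$ is again the unique most general completion of $s''$.

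Now let $s' \compl s$ with $s' \neq s$. Then there is a position $p$ with $\mtpOf(s,p) \neq \bot$ and $\mtpOf(s',p) = \bot$, so $s' \compl s[p:=\bot]$. By the upward closure just shown, it suffices to prove: for every $p$ with $\mtpOf(s,p)\neq\bot$, the term $s[p:=\bot]$ is not a correct printing.

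\textbf{Step 2: a competing completion of $s[p:=\bot]$.} First I establish the invariant that $s \compl \mgen(t) \leq v$, so that $\Poss(s)=\Poss(v)$. The inequality $\mgen(t) \leq v$ holds because $\er(t) \compl \mgen(t)$ and $v$ is most general for $\er(t)$ (\cref{prop-assumed}). Fix $\rho$ with $\mgen(t) = v[\rho]$.

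Since $\decrease$ has terminated, no position passes $\coverageTest(v,s,\cdot)$. Hence for our annotated $p$ there is some $\alpha \in \TV(\tpOf(v,p))$ that occurs in no $\tpOf(v,q)$ for any $q\neq p$ with $\mtpOf(s,q)\neq\bot$.

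Pick a tyvar $\beta$ fresh for $v$, $\mgen(t)$ and the range of $\rho$, and set $u' = v[\rho[\alpha\la\beta]]$. I will check three facts about $u'$:
\begin{enumerate*}[(a)]
\item $\vdash u'$, because well-typedness is preserved by type substitution on unambiguous F-terms (a routine induction, using that $\ctpOf$-instances are closed under substitution and that the binder condition is stable);
\item $s[p:=\bot] \compl u'$, because at every annotated $q\neq p$ the type $\tpOf(v,q)$ avoids $\alpha$, so $\tpOf(u',q) = \tpOf(v,q)[\rho] = \mtpOf(s,q)$;
\item $u' \neq \mgen(t)$, because at position $p$ the type $\tpOf(u',p)$ contains the fresh $\beta$ whereas $\tpOf(\mgen(t),p)$ does not.
\end{enumerate*}

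If $u' \not\leq \mgen(t)$, then $\mgen(t)$ is not most general among the completions of $s[p:=\bot]$, so $s[p:=\bot]$ is not a correct printing. Otherwise $u' = \mgen(t)[\theta]$ for some $\theta$. Comparing both sides at the occurrences of $\alpha$ forces $\rho(\alpha)$ to be a tyvar $\gamma$ with $\theta(\gamma)=\beta$.

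\textbf{The main obstacle is this degenerate case}, in which $u'$ and $\mgen(t)$ may be renamings of each other. Here I plan to show that $\gamma \notin \TV(\rho(\alpha'))$ for every other $\alpha'\in\TV(v)$. This should follow because $\theta$ must fix the images $\rho(\alpha')$ up to the same equations, so $\theta$ agrees with the identity on them. Granting this, the substitution $\gamma\mapsto\beta$ is invertible on $\mgen(t)$, and $\mgen(t) \leq u'$ as well. Then $u'$ is a second $\leq$-maximal completion of $s[p:=\bot]$, distinct from $\mgen(t)$ by (c), which again violates the uniqueness required by \cref{defi-correctPrint}.

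The delicate part is the bookkeeping of $\theta$ on tyvars of $v$ versus tyvars of $\mgen(t)$. It is also exactly where the literal, non-up-to-renaming reading of ``unique'' is essential. For example, with $\ctpOf(c)=\alpha$ and $t = c_\gamma$, the unannotated $c$ has most general completion $c_\alpha$, which is only a renaming of $c_\gamma$. So under an up-to-renaming reading, $c$ would be a correct printing strictly below the algorithm's output, and minimality would fail.
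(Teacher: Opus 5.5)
Your proof is correct, but it takes a different route from the paper's in two places.

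First, the paper does not reduce to the one-position erasure $s[p:=\bot]$. It works with an arbitrary $s' \complS s$ directly, observing that the tyvar $\alpha$ left uncovered by $\neg\coverageTest(v,s,p)$ remains uncovered in $s'$, because $s'$ has fewer annotations and $\mtpOf(s',p)=\bot$. So your upward-closure Step~1 is valid but not needed.

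Second, and more substantively, the paper replaces $\rho(\alpha)$ by an arbitrary $\tau\neq\rho(\alpha)$. It uses Lemma~\ref{lem-multipleCompl} only to obtain two distinct well-typed completions $v[\rho]\neq v[\rho']$ of $s'$. It then invokes Prop.~\ref{lem-liftMinimalityToMostGen} (two distinct completions yield two distinct most general ones, via the renaming Lemma~\ref{lem-twoMG}). You instead choose the replacement to be a fresh tyvar $\beta$ and compare $u'$ with $\mgen(t)$ directly. This makes your argument self-contained and bypasses Prop.~\ref{lem-liftMinimalityToMostGen}. The paper's route, in turn, reuses a lemma it needs anyway for Corollary~\ref{coro-correctPrintingMinimalHuman}.

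Your ``main obstacle'' in fact disappears given your own choice of $\beta$. Since $\beta\notin\TV(\mgen(t))=\bigcup_{\delta\in\TV(v)}\TV(\rho(\delta))$, you get $u'[\rho(\alpha)/\beta]=v[\rho]=\mgen(t)$ outright, so $\mgen(t)\leq u'$ always holds. Hence, if $\mgen(t)$ were the most general completion of $s[p:=\bot]$, then $u'$ would be a most general completion as well, and distinct from it by (c). Neither the case split on $u'\leq\mgen(t)$ nor the $\theta$/$\gamma$ bookkeeping is required.

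Your insistence on the literal, not up-to-renaming, reading of uniqueness (and your $c_\gamma$ example) matches the paper. Its Lemma~\ref{lem-twoMG} likewise counts a renamed completion as a distinct most general one.
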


One may object to the notion of correct printing, hence to the statement of completeness, as being too weak.
Indeed, $s$ being a correct printing of $t$ means that $\mgen(t)$ is the unique
\emph{most general} well-typed completion of $s$---which makes sense because this is what
the type inference algorithm produces. However, one may argue, we want $\mgen(t)$ to be the unique
well-typed completion of $s$ (without the ``most general'' qualifier), because we do not want any alternative way to type the term sneaking in; let us call this (superficially) stronger notion a \emph{strong correct printing}.  We 
show that the two notions coincide, because the existence of two well-typed completions implies the existence of two most general ones.


\begin{prop}\rm \label{lem-liftMinimalityToMostGen}
	Assume $s\in \Term$ is unambiguous, and $u,u' \in \FTerm$ are two distinct well-typed completions of $s$. Then there exist two distinct \emph{most general} well-typed completions of $s$.
\end{prop}


This immediately gives:

\begin{coro}\rm  \label{coro-correctPrintingMinimalHuman}
Assume $s\in \Term$ is unambiguous. Then $s$ is a correct printing of $t$ if and only if $s$ is a strongly correct printing of $t$.
\end{coro}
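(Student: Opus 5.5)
The plan is to unfold both definitions and compare them directly, with \cref{lem-liftMinimalityToMostGen} doing the only real work. Recall that $s$ is a correct printing of $t$ when $\mgen(t)$ is the unique most general (\ie $\leq$-maximal) well-typed completion of $s$. It is a strongly correct printing when $\mgen(t)$ is the unique well-typed completion of $s$. Throughout, $t$ is a typable unambiguous C-term, so by \cref{prop-Cterm-uniqueCompleteness} the F-term $\mgen(t)$ is literally unique and not merely unique up to renaming. This lets us read ``unique'' in both notions as plain equality of F-terms.

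\emph{Strong $\Rightarrow$ correct.} Assume $\mgen(t)$ is the only well-typed completion of $s$. Then $s$ is typable, and $s$ is unambiguous by hypothesis, so \cref{prop-assumed} yields some most general well-typed completion $u$ of $s$. Since $u$ is in particular a well-typed completion, strong correctness forces $u=\mgen(t)$. The same argument applies to any other most general completion $u'$, giving $u'=\mgen(t)$. Hence $\mgen(t)$ is the unique most general well-typed completion, which is exactly correct printing.

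\emph{Correct $\Rightarrow$ strong.} Assume $\mgen(t)$ is the unique most general well-typed completion of $s$. In particular $\mgen(t)$ is a well-typed completion of $s$. Suppose, for contradiction, that $u\neq\mgen(t)$ is another well-typed completion of $s$. Then $u$ and $\mgen(t)$ are two distinct well-typed completions of the unambiguous term $s$. By \cref{lem-liftMinimalityToMostGen}, $s$ therefore has two distinct most general well-typed completions, contradicting the uniqueness assumed by correct printing. So $\mgen(t)$ is the only well-typed completion of $s$.

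There is no genuine obstacle here, since the combinatorial content lies entirely in \cref{lem-liftMinimalityToMostGen}. The only point needing care is that ``unique'' means literal equality rather than equality up to tyvar renaming, in both the definition of correct printing and the conclusion of \cref{lem-liftMinimalityToMostGen}. This reading is justified by \cref{prop-Cterm-uniqueCompleteness}.
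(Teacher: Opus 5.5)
Your proof is correct and matches the paper, which obtains the corollary immediately from Prop.~\ref{lem-liftMinimalityToMostGen} (your ``correct $\Rightarrow$ strong'' direction), the converse being routine. Your appeal to Thm.~\ref{prop-assumed} in the easy direction is harmless but unnecessary: if $\mgen(t)$ is the only well-typed completion of $s$, it is trivially $\leq$-maximal among them by reflexivity of $\leq$.
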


\subsection{Proof development}
\label{subsec-proofDevel}

Next we give an overview of our proof development leading to the above results,
namely Thm.~\ref{thm-correctPrintingHuman}, Thm.~\ref{thm-correctPrintingMinimalHuman} and
Prop.~\ref{lem-liftMinimalityToMostGen}, as well as Prop.~\ref{prop-Cterm-uniqueCompleteness} which underlies the definition of correct printing.
\ifwithappendix
\Cref{app-detailsHumanProofs} gives full details.
\fi
The development is supported by several types of background lemmas:

\textbf{(1) Constructor-aware inversion lemmas} associated to inductively defined predicates such as $\vdash$ and $\compl$.\pagebreak[3]
These lemmas reconstitute ``one inductive rule back'' of history in situations
where an inductive predicate holds with one of the arguments having a specific syntactic constructor at the top. For example, the $\lambda$-abstraction aware left inversion lemma for $\compl$ states the following:
	If $(\lambda x_\zeta.\;s)_\xi \compl t$ then there exist $s'$, $\zeta'$ and $\xi'$
	such that $s \compl s'$,
	$\zeta \compl \zeta'$, $\xi\compl \xi'$ and $t= (\lambda {x_{\zeta'}}.\;s')_{\xi'}$.
The (immediate) proofs of these lemmas combine the freeness (injectiveness) of the syntactic constructors with the standard inversion (elimination) rules stemming from inductive definitions. (Incidentally, Isabelle automates the process of inferring these lemmas via the \texttt{inductive\_cases} command.)

\textbf{(2) Lemmas on syntax basics,} describing properties of the occurring-variable and substitution application operators, $\TV$ and $\_[\_]$. These lemmas are mostly the typical ones proved when developing a theory of syntax, such as substitution compositionality, $t[\rho \scmp \rho'] = t[\rho'][\rho]$,
distributivity of $\TV$ along substitution, $\TV(t[\rho]) = \bigcup_{\alpha \in \TV(t)} \!\TV(\rho(\alpha))$,
and substitution extensionality (in the Isabelle ecosystem also known as substitution congruence),
$\forall \alpha \in \TV(t).\,\rho(\alpha) = \rho'(\alpha)$ implies $t[\rho] = t[\rho']$.
We will in fact need the converse of extensionality,
which is slightly outside the usual repertoire
in theories of syntax: $t[\rho] = t[\rho']$ implies $\forall \alpha \in \TV(t).\,\rho(\alpha) = \rho'(\alpha)$.
There are both type and term variants of these lemmas, and their proofs go by straightforward
structural induction on types or terms (with the proofs of the latter using the former).

\textbf{(3) Annotation lemmas,} by which we collectively mean simple lemmas about positions, type annotations at positions, and the annotation-removal operator and their interaction with substitution and occurring tyvars---stating, for example, that positions
are not affected by substitutions, $\Poss(t[\rho]) = \Poss(t)$, that
substitution operates position-wise, 
$\mtpOf(t[\rho],p) = \mtpOf(t,p)[\rho]$,
and that deleting an annotation yields a decrease in the annotation ordering, $t[p:= \bot] \compl t$.
The proofs again proceed by straightforward
structural inductions on terms.

\textbf{(4) The type preservation lemma,} stating that substitution preserves typing, i.e., $\vdash v$ implies $\vdash v[\rho]$, proved by induction on the definition of $\vdash$.

With these preparations, the proof justifying
Prop.~\ref{prop-Cterm-uniqueCompleteness} is a low-hanging fruit.

\begin{proof}[Prop.~\ref{prop-Cterm-uniqueCompleteness}.]
Let $u$ and $v$ be such that $t \compl u$, $t \compl v$, $\vdash u$, $\vdash v$.
Then $u=v$ follows by induction on $t$, using the constructor-aware left inversion rules for $\compl$. The fact that $t$ is a C-term is used in the variable, constant, and abstraction cases, ensuring that respective annotation is 
a type (not just a maybe-type). \qed
\end{proof}
%

Moving towards the main results, we state lemmas about annotation subsumption and its interaction with the other operators. 
In particular, we need that $\compl$ is a preorder on terms,
does not affect the position sets, and interacts as expected with annotations at positions (in that $s \compl t$ implies $\mtpOf(s,p) \compl \mtpOf(s,p)$)---we will refer to these as \emph{the $\compl$-lemmas}.
Moreover, we need that 
less annotated terms yield more general completions, in that $t \compl s$ implies $\mgen(s) \leq \mgen(t)$. This fact, henceforth called 
\emph{the $(\compl,\leq)$-lemma},
supports the main intuition behind the Smolka-Blanchette algorithm---which, as discussed in
\cref{sec-annotate}, decides on 
removing 
annotations taking advantage of the
position synchronization offered by $\mgen(t) \leq \mgen(\er(t))$, and this in turn follows from $\er(t) \compl t$.


As a final preparation, we need some consequences of the (quasi-)generic
reverse greedy component of the algorithm,
namely that the result has fewer annotations than the original
and that the coverage test no longer holds for the result.

Now, the technical core of the correct printing theorem (Thm.~\ref{thm-correctPrintingHuman})
is expressed by the following ``sandwich'' property: If an F-term $u$
on the one hand subsumes the annotations of a term $s$, and on the other hand is less general than a term $v$, i.e., is $(\compl,\leq)$-sandwiched between $s$ and $v$, such that all tyvars of $v$ are covered by an annotation of $s$, then $u$ is uniquely determined by $s$ and $v$.

\begin{mylemma} \rm  \label{lem-crucial}
	Assume $s\in \Term$ is unambiguous and $u,v\in \FTerm$ such that
	$s  \compl u \leq v$ and $\forall \alpha \in \TV(v).\,\exists p \in \Poss(v).\,\alpha \in \TV(\tpOf(v,p)) \wedge \mtpOf(s,p) \not= \bot$. Then $u$ is the unique $u' \in \FTerm$ such that
	$s \compl u' \leq v$.
\end{mylemma}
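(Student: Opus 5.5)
Let $u'\in\FTerm$ satisfy $s\compl u'\leq v$. The plan is to show $u'=u$. Choose substitutions $\rho,\rho'$ with $u=v[\rho]$ and $u'=v[\rho']$. By substitution extensionality (\cref{sec-synPrelim}, background lemmas (2)), it suffices to prove $\rho(\alpha)=\rho'(\alpha)$ for every $\alpha\in\TV(v)$. That would give $v[\rho]=v[\rho']$, i.e.\ $u=u'$.

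Fix $\alpha\in\TV(v)$. By hypothesis there is $p\in\Poss(v)$ with $\alpha\in\TV(\tpOf(v,p))$ and $\mtpOf(s,p)=\sigma\neq\bot$. We compare the annotations at $p$ in $u$ and in $u'$:
\begin{itemize}
\item The $\compl$-lemmas give $\Poss(s)=\Poss(u)=\Poss(u')$.
\item The substitution annotation lemma gives $\Poss(u)=\Poss(v[\rho])=\Poss(v)$, so $p$ is a valid position in all four terms.
\item From $s\compl u$ and the $\compl$-lemma on annotations, $\mtpOf(s,p)\compl\tpOf(u,p)$. Since $\mtpOf(s,p)=\sigma\neq\bot$, this forces $\tpOf(u,p)=\sigma$. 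By the same argument, $\tpOf(u',p)=\sigma$.
\item By the lemma that substitution acts position-wise, $\tpOf(u,p)=\tpOf(v,p)[\rho]$ and $\tpOf(u',p)=\tpOf(v,p)[\rho']$.
\end{itemize}
Together these yield $\tpOf(v,p)[\rho]=\tpOf(v,p)[\rho']$. The converse of extensionality, in its type variant (background lemmas (2)), then gives $\rho(\beta)=\rho'(\beta)$ for all $\beta\in\TV(\tpOf(v,p))$, and in particular for $\alpha$. Since $\alpha$ was arbitrary, we are done.

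The unambiguity of $s$ is not needed in this argument. It is presumably carried along because the term-level substitution lemmas are stated for unambiguous terms. If some of those lemmas require it, we would transfer unambiguity from $s$ to $u,u',v$ via $\compl$ and $\leq$, since neither relation changes binder names or positions.

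The main obstacle is the converse-extensionality lemma $\tau[\rho]=\tau[\rho']\Rightarrow\forall\beta\in\TV(\tau).\,\rho(\beta)=\rho'(\beta)$. It is proved by structural induction on $\tau$:
\begin{itemize}
\item If $\tau$ is a tyvar, the claim is immediate.
\item If $\tau=\sigma_1\Ra\sigma_2$ or $\tau=(\sigma_1,\ldots,\sigma_n)\,\kappa$, injectivity of the type constructors splits the equation into $\sigma_i[\rho]=\sigma_i[\rho']$ for each argument, and we apply the induction hypothesis.
\end{itemize}
The remaining care is bookkeeping that $p$ indexes an actual annotation, not a binder versus subterm mismatch. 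This is handled uniformly because $\mtpOf$ covers binding-variable positions as well.
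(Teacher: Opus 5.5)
Your proposal is correct and matches the paper's proof step for step. You obtain $\rho,\rho'$ with $u=v[\rho]$ and $u'=v[\rho']$, reduce by substitution extensionality to $\rho(\alpha)=\rho'(\alpha)$ for each $\alpha\in\TV(v)$, use the covering position $p$ together with $s\compl u,u'$ to get $\tpOf(v,p)[\rho]=\tpOf(v,p)[\rho']$, and conclude by converse extensionality, exactly as the paper does.
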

\begin{proof}
	From $u \leq v$, we obtain a substitution $\rho$ such that
	$u \leq_\rho v$.
	Let $u' \in \FTerm$ be such that $s \compl u' \leq v$, i.e., (1) $s \compl u' \leq_{\rho'} v$ for some $\rho'$. We need to prove $u' = u$.
	By syntax basics, it suffices to fix
	$\alpha \in \TV(v)$ and to show that $\rho(\alpha) = \rho'(\alpha)$.

	From $\alpha \in \TV(v)$ and our assumption, we obtain $p \in \Poss(v)$ such that  (2) $\alpha \in \TV(\tpOf(v,p))$ and $\mtpOf(s,p) \not= \bot$.  With  the $\compl$-lemmas and $s\compl u,u'$, this implies (3) $\tpOf(u,p) = \mtpOf(s,p) = \tpOf(u',p)$.
	Moreover, by the annotation lemmas, we have (4) $\tpOf(u,p) = \tpOf(v[\rho],p)  = \tpOf(v,p)[\rho]$, and similarly (5) $\tpOf(u',p) = \tpOf(v[\rho'],p)  = \tpOf(v,p)[\rho']$.
	From (3), (4) and (5), we obtain
	$ \tpOf(v,p)[\rho] = \tpOf(v,p)[\rho']$. Applying syntax basics (specifically the converse of substitution extensionality) to
	this and (2) yields $\rho(\alpha) = \rho'(\alpha)$, as desired.  \qed
\end{proof}

\begin{proof}[Thm.~\ref{thm-correctPrintingHuman}]
Let $v = \mgen(\er(t))$. By $\smobla$'s definition, we have $s = \decrease(\pickPos,v,\mgen(t))$.
 We first prove $\phi(s') \implies \phi(\decrease(\pickPos,v,s'))$ for terms $s'$
 by induction on $|\Poss_{\not=\bot}(s')|$, where
{%
\setlength{\abovedisplayskip}{1pt}
\setlength{\belowdisplayskip}{1pt}
\begin{align*}
  \phi(s') = {}& \er(t) \compl s' \compl \mgen(t) \land {} \\
              & \forall \alpha \in \TV(v).\,\exists p \in \Poss(v).\;\alpha \in \TV(\tpOf(v,p)) \wedge \mtpOf(s',p) \not= \bot,\\
    \Poss_{\not=\bot}(s') = {}& \{p \in \Poss(s') \mid\mtpOf(s',p) \not = \bot\}.
\end{align*}}

   	In the inductive proof, we distinguish two cases:
   	(i) if $\exists p.\;\coverageTest(v,s',p)$, then we apply the induction hypothesis together with $\compl$-lemmas, annotation lemmas, and some straightforward set-theoretic computation;
   	(ii) otherwise, we have $\decrease(\pickPos,v,s') = s'$, and the fact holds trivially.

   Now, since $\phi(\mgen(t))$ holds, we obtain $\phi(s)$.
	By the $(\compl,\leq)$-lemma, we have $\mgen(t) \leq v$. From this and $\phi(s)$,
	by Lemma~\ref{lem-crucial} we obtain that $\mgen(t)$ is the unique F-term $u$ such that
	$s \compl u \leq v$.

	Since $\vdash \mgen(t)$ and $\forall u\in\FTerm.\;s \compl u \;\wedge\, \vdash u \implies u \leq v$ (by the $(\compl,\leq)$-lemma and the $\er(t) \compl s$ fact), we obtain that $\mgen(t)$ is the unique well-typed completion of $s$, \ie the unique F-term $u$ such that
	$\vdash u$ and $s \compl u$.
	\qed
\end{proof}

What we actually proved in Thm.~\ref{thm-correctPrintingHuman} is the a priori stronger property that $s$ is a strongly correct printing of $t$---as we did not assume $u$ to be most general.

It remains to prove Prop.~\ref{lem-liftMinimalityToMostGen} and
the minimality theorem Thm.~\ref{thm-correctPrintingMinimalHuman}.
These require further preparations.
First, a trivial lemma about observable difference
in annotations when descending on the annotation subsumption relation:

\begin{mylemma}\rm \label{lem-otherTriv}
	If $s,s'$ are terms such that $s' \complS s$ ,
	then there exists $p \in \Poss(s) = \Poss(s')$ such that $\mtpOf(s',p) = \bot \not= \mtpOf(s,p)$.
\end{mylemma}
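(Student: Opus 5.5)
The plan is a structural induction on the derivation of $s' \compl s$; equivalently, a structural induction on $s'$ using the constructor-aware left inversion lemmas for $\compl$. Here $s' \complS s$ is read as $s' \compl s$ and $s' \not= s$. The invariant I will carry is the slightly more general claim: if $s' \compl s$ and $s' \not= s$, then there is $p \in \Poss(s)$ with $\mtpOf(s',p) = \bot \not= \mtpOf(s,p)$. The equality $\Poss(s) = \Poss(s')$ is already one of the $\compl$-lemmas, so it needs no separate argument.

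\textbf{Leaf cases.} Suppose $s' = x_\xi$ and $s = x_\zeta$, with $\xi \compl \zeta$. Since $s' \not= s$, we get $\xi \not= \zeta$. By the definition of $\compl$ on $\Type_\bot$, this forces $\xi = \bot \not= \zeta$, and we take $p = []$. The constant case is identical.

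\textbf{Application and abstraction.} Suppose $s' = (s_1'\,s_2')_{\xi}$ and $s = (s_1\,s_2)_{\xi'}$, with $s_i' \compl s_i$ and $\xi \compl \xi'$. Inequality of the two terms means that one of three things holds: $\xi \not= \xi'$, or $s_1' \not= s_1$, or $s_2' \not= s_2$. In the first case we take $p = []$ as in the leaf case. Otherwise the induction hypothesis on the differing component $s_i'$ yields a position $q$, and we take $p = i\cdot q$ (the list $q$ with $i$ prepended). The abstraction case $(\lambda x_\zeta.\,s_1')_\xi \compl (\lambda x_{\zeta'}.\,s_1)_{\xi'}$ is handled the same way, with three possible sources of difference:
\begin{itemize}
\item the outer decoration ($p=[]$);
\item the binder annotation $\zeta \not= \zeta'$, which gives $p=[1]$ since $\zeta \compl \zeta'$ forces $\zeta = \bot$;
\item the body, which gives $p = 2\cdot q$ via the induction hypothesis.
\end{itemize}

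In each case I unfold the definitions of $\Poss$ and $\mtpOf$ on the relevant constructor to confirm that the prepended position is valid in $s$ and has the claimed decorations in $s'$ and $s$. There is no genuine obstacle; the only care needed is bookkeeping. Specifically, I must match the conventions for positions of binding variables ($[1]$) and abstraction bodies ($[2]$) with the positional definitions given for $\mtpOf$.
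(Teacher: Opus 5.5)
Your proposal is correct and takes essentially the same approach as the paper. The paper's proof is also an induction on the definition of $\compl$ that unfolds the definition of $\mtpOf$; your case split over the outer decoration, the binder position $[1]$ and the subterm positions $i \cdot q$ simply makes explicit the bookkeeping the paper leaves implicit.
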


Then a crucial lemma allowing us to change
an F-term's instance (by changing the instantiating substitution) without affecting
annotation subsumption:

\begin{mylemma}\rm \label{lem-multipleCompl}
	Assume $s$ is a term, $v$ is an F-term and $\rho,\rho'$ are substitutions
	such that 
		$s \compl v[\rho]$.
		and $\forall \alpha,p.\;p\in \Poss(v) \wedge \alpha \in \TV(v,p) \wedge \mtpOf(s,p) \not= \bot \implies \rho(\alpha) = \rho'(\alpha)$.
	Then $s \compl v[\rho']$.\end{mylemma}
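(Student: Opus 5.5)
The plan is a position-wise argument; no induction on $s$ is needed. Recall that $s \compl w$ is characterised positionally by the $\compl$-lemmas: $s \compl w$ holds iff $\Poss(s) = \Poss(w)$ and $\mtpOf(s,p) \compl \mtpOf(w,p)$ for every $p \in \Poss(s)$. I would first state this characterisation explicitly, in the ``if'' direction, as an auxiliary lemma. It is proved by structural induction on $s$, with case analysis on the shape of $w$ via the constructor-aware left inversion lemmas: equal position sets force equal term shapes, and each decoration is compared at the corresponding position. Term variables are not affected by substitution, so the shapes agree. I expect this converse direction to be the main obstacle. It is routine but fiddly, since one must check that positions and $\mtpOf$ faithfully encode the whole term structure, including binder decorations at positions ending in $1$ under abstractions.

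With the characterisation in hand, the proof runs as follows. By the annotation lemmas, $\Poss(v[\rho']) = \Poss(v) = \Poss(v[\rho])$. From $s \compl v[\rho]$ and the $\compl$-lemmas we get $\Poss(s) = \Poss(v[\rho])$, hence $\Poss(s) = \Poss(v[\rho'])$.

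Now fix $p \in \Poss(s)$. If $\mtpOf(s,p) = \bot$, then $\mtpOf(s,p) \compl \mtpOf(v[\rho'],p)$ holds trivially. Otherwise $\mtpOf(s,p) \neq \bot$, and from $s \compl v[\rho]$ we get $\mtpOf(s,p) = \mtpOf(v[\rho],p) = \tpOf(v,p)[\rho]$ by the annotation lemma on substitution acting position-wise. The hypothesis gives $\rho(\alpha) = \rho'(\alpha)$ for every $\alpha \in \TV(\tpOf(v,p))$, with $p \in \Poss(v)$ as shown above. Substitution extensionality (syntax basics) then yields $\tpOf(v,p)[\rho] = \tpOf(v,p)[\rho']$, which equals $\mtpOf(v[\rho'],p)$. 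Hence $\mtpOf(s,p) = \mtpOf(v[\rho'],p)$.

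Every position therefore satisfies the decoration condition, and the auxiliary characterisation gives $s \compl v[\rho']$.

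As an alternative that avoids the characterisation lemma, one could induct directly on the derivation of $s \compl v[\rho]$, generalising over $v$ and passing the hypothesis to subterms by prefixing positions. Each case is then an inversion on $v$'s constructor followed by the same extensionality step at the root annotation. I would fall back on this if the positional characterisation proves awkward to state.
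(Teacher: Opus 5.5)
Your primary route has a genuine gap. The ``if'' direction of the positional characterisation is false as you state it. $\Poss$ and $\mtpOf$ record only the tree of decorations. They forget which constructor sits at a node and which variable or constant name it carries.

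For example, $x_\bot$ and $c_\sigma$ both have position set $\{[]\}$ and $\bot \compl \sigma$, yet $x_\bot \compl c_\sigma$ does not hold. Similarly, $(x_\bot\,y_\bot)_\bot$ and $(\lambda z_\sigma.\,y_\tau)_{\sigma\Ra\tau}$ both have positions $\{[],[1],[2]\}$, and every decoration on the left is $\bot$. Still, $\compl$ never relates an application to an abstraction.

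So ``equal position sets force equal term shapes'' fails. The structural induction you sketch gets stuck exactly where constructors or names differ. Your remark that substitution does not touch term variables points to the repair. But it has to become a hypothesis of the lemma, not an observation about how you intend to use it.

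A correct version reads: if $\er(s) = \er(w)$ and $\mtpOf(s,p) \compl \mtpOf(w,p)$ for all $p \in \Poss(s)$, then $s \compl w$. Equal erasures already give equal position sets, via $\er(t) \compl t$ and the $\compl$-lemmas. To apply it, you additionally need two facts, each a further small induction:
\begin{itemize}
\item $\er(s) = \er(v[\rho])$, obtained from $s \compl v[\rho]$;
\item $\er(v[\rho]) = \er(v) = \er(v[\rho'])$.
\end{itemize}
After that, your pointwise extensionality argument is correct.

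The paper instead proves the lemma directly by structural induction on $v$. It generalises over $s$, pushes the hypothesis to subterms by prefixing positions with $1$ or $2$, and applies extensionality at each non-$\bot$ decoration. This is essentially your fallback. It needs one induction, whereas the repaired positional route needs three, so the fallback is the better choice here.
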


Finally, we need a lemma
that allows us to construct a second well-typed most general completion
from a given one with ``loose'' tyvars:

\begin{mylemma}\rm \label{lem-twoMG}
	Assume $s\in \Term$ is unambiguous, and $v$ is a most general well-typed completion of $s$ such as $\TV(v) \sm \TV(s) \not= \emptyset$.
	Then there exists a most general well-typed completion of $s$ different from $v$.
\end{mylemma}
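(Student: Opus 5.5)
Pick a ``loose'' tyvar $\alpha \in \TV(v) \sm \TV(s)$ and rename it to a fresh tyvar $\beta \notin \TV(v) \cup \TV(s)$, which exists since $\TVar$ is infinite and both sets are finite. Set $v' = v[\beta/\alpha]$. Three things must then be checked: $v'$ is a well-typed completion of $s$; it is most general; and $v' \neq v$.

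\emph{Well-typedness.} $\vdash v'$ follows from $\vdash v$ by the type preservation lemma.

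\emph{Completion.} We show $s \compl v'$ with Lemma~\ref{lem-multipleCompl}, taking $\rho$ the identity substitution and $\rho' = \beta/\alpha$. Then $s \compl v[\rho] = v$ holds by assumption. For the side condition, take $p \in \Poss(v)$ with $\mtpOf(s,p) \neq \bot$ and $\gamma \in \TV(\tpOf(v,p))$. By the $\compl$-lemmas, $\mtpOf(s,p) = \tpOf(v,p)$, so $\gamma \in \TV(s)$. Hence $\gamma \neq \alpha$, and therefore $\rho(\gamma) = \gamma = \rho'(\gamma)$.

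\emph{Distinctness.} Since $\alpha \in \TV(v)$, there is a position $p$ with $\alpha \in \TV(\tpOf(v,p))$. By the annotation lemmas, $\tpOf(v',p) = \tpOf(v,p)[\beta/\alpha]$. This type contains $\beta$, whereas $\tpOf(v,p)$ does not, because $\beta$ is fresh. So $v' \neq v$.

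\emph{Most generality.} This is the main obstacle. Let $w$ be a well-typed completion of $s$ with $v' \leq w$. We need $w \leq v'$.
\begin{itemize}
\item By composing substitutions, $v \leq_{\alpha/\beta} v'$, because $v'[\alpha/\beta] = v[(\alpha/\beta)\scmp(\beta/\alpha)]$ agrees with $v$ on $\TV(v)$ (here freshness of $\beta$ is used, via substitution extensionality).
\item Therefore $v \leq v' \leq w$, and maximality of $v$ gives $w \leq v$, say $w = v[\sigma]$.
\item Then $w = v'[\sigma \scmp (\alpha/\beta)]$, since $v = v'[\alpha/\beta]$. Hence $w \leq v'$, so $v'$ is $\leq$-maximal.
\end{itemize}
Unambiguity of $s$ is inherited by $v$ and $v'$, so substitution behaves correctly throughout and no capture issues arise.

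I would first double-check the direction of composition in the paper's convention $(\rho\scmp\rho')(\alpha)=\rho'(\alpha)[\rho]$. I would also confirm that ``most general'' means $\leq$-maximal in the sense of Thm.~\ref{prop-assumed}; that is, no strictly more general completion exists, and there is no requirement to dominate all completions. If it instead meant dominating all completions, the argument simplifies: $v$ and $v'$ are mutual instances, so $v'$ dominates everything $v$ does.
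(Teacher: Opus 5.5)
Your proof is correct and follows the paper's argument essentially step by step. You rename a loose tyvar $\alpha$ to a fresh $\beta$, obtain $\vdash v[\beta/\alpha]$ from type preservation and $s \compl v[\beta/\alpha]$ from Lemma~\ref{lem-multipleCompl} (with the identity and $\beta/\alpha$), and transfer generality via $v = v[\beta/\alpha][\alpha/\beta]$ and substitution composition; the paper's final step reads ``most general'' as ``every well-typed completion is an instance of $v$'' (writing $w = v[\rho] = v'[\rho \scmp (\alpha/\beta)]$), whereas your extra transitivity step also covers the literal $\leq$-maximality reading of Thm.~\ref{prop-assumed}.
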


\begin{proof}[Prop.~\ref{lem-liftMinimalityToMostGen}]
Expanding the definitions, we have $u \not= u'$, $s\compl u$, $\vdash u$,
$s\compl u'$ and $\vdash u'$.
Let $v = \mgen(s)$.
From the definition of $\mgen$, we have $\vdash v$ and obtain the substitutions $\rho,\rho'$ such that $u = v[\rho]$ and $u' = v[\rho']$.
Since $u\not=u'$, by syntax basics we obtain $\alpha \in \TV(v)$ such that (a) $\rho(\alpha) \not= \rho'(\alpha)$.

We claim that $\alpha \notin \TV(s)$. 
Assuming $\alpha \in \TV(s)$, the annotation lemmas yield $p \in \Poss(v) = \Poss(u) = \Poss(u') = \Poss(s)$ such that (b)
$\mtpOf(s,p) \not=\bot$ and (c) $\alpha \in \TV(\mtpOf(s,p))$. From these and $s \compl v$, we have (d) $\alpha \in \TV(\tpOf(v,p))$.
From (b) and $s\compl u,u'$, we have
$\tpOf(u,p) = \tpOf(u',p) = \mtpOf(s,p)$. From this, by annotation lemmas we have $\tpOf(v,p)[\rho] = \tpOf(v,p)[\rho']$, which together with (a) and (d) yields a contradiction by syntax basics.

Hence, $\alpha \in \TV(v) \sm \TV(s)$. From this and $v$'s definition, by Lemma~\ref{lem-twoMG}, we obtain $v'$ such that $v' \not= v$ and $v'$ is a most general well-typed completion of $s$. As desired, $v$ and $v'$ are distinct most general well-typed completions of $s$.
\qed
\end{proof}

\begin{proof}[Thm~\ref{thm-correctPrintingMinimalHuman}]
	Let $s'$ be an unambiguous term such that (a) $s' \complS s$.
	Let $v = \mgen(\er(t))$.
	By  Lemma~\ref{lem-otherTriv}, there exists $p \in \Poss(s) = \Poss(s')$ such that (b) $\mtpOf(s',p) = \bot \not= \mtpOf(s,p)$.  Since by 
	generic reverse greedy lemmas we have $\neg\;\coverageTest(v,s,p)$, we obtain $\alpha \in \TV(\tpOf(v,p))$ such that $\forall q \in \Poss(s) \sm \{p\}.\;\alpha \in \TV(\tpOf(v,q)) \implies \mtpOf(s,q) = \bot$. With (a) and (b), %
	this gives us (c) $\forall q \in \Poss(s').\;\alpha \in \TV(\tpOf(v,q)) \implies \mtpOf(s',q) = \bot$.

	Let $\rho$ be such that $\mgen(t) = v[\rho]$. From this, since $\mgen(s) = \mgen(t)$ by Thm.~\ref{thm-correctPrintingHuman}, we obtain $\mgen(s) = v[\rho]$, and therefore (1) $s' \compl v[\rho]$.

	We choose a fresh type $\tau$ such that
	$\tau \not= \rho(\alpha)$ (for example, a fresh tyvar),
	and let $\rho' = \rho[\alpha \ra \tau]$. Using (c) and the definition of $\rho'$,
	we obtain
	\begin{itemize}
		\item[(2)] $\forall q\in \Poss_{\not=\bot}(s'), \beta \in \TV(v,q).\; \rho(\beta) = \rho'(\beta)$.
	\end{itemize}
	For (1) and (2), applying Lemma~\ref{lem-multipleCompl}, we obtain (3) $s' \compl v[\rho']$.

	By the type substitution lemma, from $\vdash v$ we obtain (4) $\vdash v[\rho']$.
	Since $\alpha \in \TV(v)$ and $\rho(\alpha) \not= \rho'(\alpha)$, by  syntax basics (again the converse of the substitution extensionality lemma), we obtain (5) $v[\rho] \not = v[\rho']$. %

	Thus, (1--5) tell us that $v[\rho]$ and $v[\rho']$ are two distinct
	well-typed completions of $s$. Finally, we apply Prop.~\ref{lem-liftMinimalityToMostGen} to obtain
	two distinct \emph{most general} well-typed completions of $s'$, as desired.
\qed
\end{proof}

Note that we used Prop.~\ref{lem-liftMinimalityToMostGen} (which forms the basis of the equivalence between correct printing and strong correct printing) in order to finalize our proof of minimality. This is natural, because in the proof it was easier to prove minimality in the extent of the (superficially) narrower concept of strong correct printing---which could then be lifted to minimality w.r.t.\ correct printing via Prop.~\ref{lem-liftMinimalityToMostGen}.

\paragraph{Summary of the human 
development. } Unsurprisingly for a metatheory of syntax, a sizable number of simple lemmas were needed, having straightforward inductive proofs. 
In addition, some creativity was required to pin down the intuition of why the coverage test correctly models the typing information preservation goal (via results such as the sandwich \Cref{lem-crucial} and the instance changing \Cref{lem-multipleCompl}).
Apart from the proof development per se, modeling the concepts required 
nontrivial design choices, such as working with terms decorated with maybe-types and introducing annotation subsumption $\compl$ inductively. 


\section{AI-Authored Paper Proof Development}
\label{sec:llm_paper_proof}

In parallel to 
the human-authored development described in the previous section,
we independently employed an AI agent to produce a pen-and-paper (LaTeX)
specification and proofs for the 
Smolka-Blanchette algorithm. 
We first describe
our technical setup (\cref{sec:llm_setup})
and the paper generation process (\cref{sec:llm_proof}),
and then provide an analysis and a comparison
with the human proof development (\cref{subsec-discussPendAndPaperAI}).

\subsection{Technical setup}
\label{sec:llm_setup}

Our setup is based on \emph{\opencode} \cite{opencode-v1.2.24},
an open source AI coding environment.
It provides AI agents with several tools,
such as the ability to read and edit files,
search the web,
and usage of the command line
and the Model Context Protocol (MCP).
As our agent, we used \claude~4.6-v1 \cite{anthropic-claude-opus-4.6},
a state-of-the-art model optimized for coding and reasoning tasks.

In each experiment, we provided the agent some background material
(\eg links to related work)
and an experiment-specific \texttt{AGENTS.md} file as input.
The \texttt{AGENTS.md} files contain the detailed instructions for the agent.
The files were created by the agent itself
from short hand-crafted prompts containing
the problem statement and some guidelines (take notes, make backups, etc.).
The experiments mimicked an academic peer-review process:
The agent was instructed to create a complete artifact (a LaTeX document),
which was then reviewed by human experts.
The feedback was passed to the agent, and the process was repeated.


\subsection{Generating the paper proofs}
\label{sec:llm_proof}

Generating proofs for the Smolka-Blanchette algorithm poses
challenges. 
First, their original presentation of the concepts \cite{\smoblacitations} is extremely informal.
Not only does it lack proofs, but the desirable properties are not rigorously defined.
Concerning the notion of correct printing, they introduce it by saying:
``types are inferred correctly when the generated HOL formulas
are parsed again by Isabelle''~\cite{smobla_journal}.
As for minimality, they note that the \emph{greedy algorithm}'s
``goal is to compute a locally minimal set of sites that completely covers all
type variables [in the substitution's domain]''~\cite{smobla_journal}.
Thus, the notion of completeness (complete coverage) with respect to which they formulate minimality appears not to be the ``declarative'' notion of not losing typing information (which in \cref{sec-annotate} we captured by the definition of correct printing) but simply the ``operational'' notion employed by the algorithm---namely, that of passing the coverage test ($\coverageTest$ in our \cref{sec-annotate} notation).  Completeness and minimality w.r.t.\ this operational notion follow directly from the nature of the reverse greedy algorithm.
However, the algorithm is undoubtedly aimed at achieving the declarative versions (of the kind we formulated in \Cref{sec-annotate}), but that is not made explicit in their text.
%
Another smaller challenge comes from their implementation diverging from their paper presentation 
by including several additional optimizations. 

Despite these challenges,
the agent produced a specification, with some limitations on generality,
and mostly correct proof sketches in its first attempt, as further elaborated below.
Following the initial version,
we ran three revision rounds
in which the agent resolved the limitations while also adding proof details,
resulting in a correct proof document amenable to autoformalization
(cf.\ \cref{sec:llm_autoformalization}).
We next describe the agent's input
followed by a summary of the generation process.
All artifacts are available in the supplementary material.

\paragraph{Input.}
We created a \texttt{notes.md} file, containing
\begin{enumerate*}[(1)]
\item links to previous works~\cite{\smoblacitations},
\item the corrected Isabelle/ML implementation, and
\item the execution trace of the algorithm on an example.
\end{enumerate*}
In the \texttt{AGENTS.md} file,
we instructed the agent to create results publishable at a peer-reviewed venue
and to take notes in files while iteratively working on its tasks.
The notes serve as reference for the agent,
\eg to recover its progress upon memory loss due to context compaction.

In rounds 2--4, we moreover passed as input a review by a human expert
and extra instructions to use an internal two-stage review process before finishing
the paper:
\begin{enumerate*}[(1)]
\item \emph{Self-review}: the agent creates a review for its paper and revises it.
\item \emph{Simulated peer-review}:
the agent performs simulated peer reviews for a scientific submission
and finalizes the paper based on the feedback.
\end{enumerate*}
We found that the agent identified
several presentational and some logical issues using internal reviews,
such as the wrong usage of an assumption,
the need for extra lemmas,
and missing definitions.
However, some significant shortcomings were only identified by human review, as explicated below.

\paragraph{Round 1.}
In the initial round, we prompted the agent to
create a precise, formal exposition for the unoptimized algorithm including proofs.
The agent successfully produced a paper
with a specification and proof sketches.
Here are some notable highlights, impressing a non-empty subset of the authors:
(1) While the statement of completeness is informal---``reparsing [the output] recovers exactly the types of $t$ [the input]''---
the agent showed the correct formal statement (cf. Thm.~\ref{thm-correctPrintingHuman}
and Def.~\ref{defi-correctPrint}) in its proof.
(2) The agent introduced an 
almost correct statement of minimality.
(3) The agent discovered correct proof approaches for completeness and minimality
and introduced several required lemmas,
such as converse substitution extensionality (cf. \cref{subsec-proofDevel}).
%

However, the human reviewer also identified several defects, most notably:
(1) The completeness and minimality proofs were restricted to ground terms (for no good reason),
and minimality also needed
a small correction to exclude an existential quantifier instantiation that
made the statement vacuously true.
(2) There were some self-contradictory statements, \eg ``The output of the annotation algorithm is a constraint-free term augmented with type constraints''. 
(3) Definitions for key concepts
(\eg annotated terms, well-typedness, position enumeration),
and statements of key lemmas (\eg preservation of well-typedness under substitution) were missing.

Some milder issues included the AI's affinity for the \isabelleml implementation,
resulting in a very technical presentation.
For example, it used
Isabelle 
jargon and notations, such as $\mathsf{dummyT}$, inference in ``pattern'' mode,
and a carbon copy of Isabelle's term and type syntax
(including a false claim: ``for any type $\tau$ of sort $s$, the type $\tau\; \mathsf{list}$ also has sort $s$'').
Moreover, it delved into implementation-specific performance optimizations that are irrelevant for the algorithm.

\paragraph{Rounds 2--4.}
Each revision round was initiated by a short prompt:
read the new input (notably, the human review), devise a plan, and revise.
The human review summarized the shortcomings of the previous round.
Admittedly, writing the reviews was not an overly pleasant task,
but it seemed like a necessary one:
while the agent was able to convince itself and its role-played scientific peers
that the paper was worthy of publication fairly quickly, the human controller thought otherwise.
The agent was able to correct, adapt and generalize its statements and proofs 
and to significantly improve its exposition in each revision round,
but it required substantial human work to spot and review logical flaws,
defects in generality, and deficiencies in rigor.
After four rounds, we were convinced that the document is ready for autoformalization
(though not for ``publication'').

\subsection{Discussion}
\label{subsec-discussPendAndPaperAI}

The most remarkable aspect of this experiment's outcome was the agent's ability
to make its way towards 
the correct concepts and results,
and to produce proof sketches that were plausible
(and later validated by autoformalization). Specifically, 
the agent
discovered the notion of strongly correct printing---although it did not introduce it
as a separate notion, the agent's statements of completeness and minimality are equivalent
to variations of our Thms.~\ref{thm-correctPrintingHuman} and \ref{thm-correctPrintingMinimalHuman} that have ``strongly correct printing'' instead
of ``correct printing''. Our 
 Corollary~\ref{coro-correctPrintingMinimalHuman} shows that the two concepts are equivalent, though
 this is not something that the agent has attempted (nor was it prompted) to establish.
Another remarkable aspect is the informative nature of the agent's output.
The agent's human controller learned the problems's correct specification and
the proof ideas by reading, 
and also by critically scrutinizing the generated document.
Moreover, before being informed by the agent's work, the human expert producing our pen-and-paper development from \Cref{sec-annotate}
had not realized the necessity to even state minimality (essentially due to
confusing the operational and the declarative versions, hence wrongly assuming minimality to be trivial).

%
A persistent weakness of the agent was its lack of precision,
including missing or slightly wrong definitions and lemmas, wrong assumptions, and a few logical errors.
We employed simulated peer reviews by the agent to mitigate these issues,
but many slipped through the review process,
only being identified by the human expert.
It is reported that LLMs perform worse detecting their own mistakes than those by other models~\cite{selfreview_bad},
which we will consider for future work.
All in all, the prevalence of such flaws urges the need for verification,
which we consider in \cref{sec:autoformalization}.

The agent's document, while not publication-ready,
served as an effective self-study material
that human experts can scrutinize and improve.
\leftOut{
	\begin{enumerate*}[(1)]
		\item There were flaws in the agent's proofs%
		---some blatant (\eg restriction to ground terms) and some subtle (\eg the insufficient formulation of minimality)---%
		that the human had to spot,
		requiring expertise of the underlying theory.
		\item The agent stayed closely to the existing work and \isabelleml implementation,
		even when instructed to abstract from particularities.
		This complicates the reuse of existing and invention of new concepts
		(such as maybe-types and completion relations $\compl$ from \cref{sec:human_paper_proof})
		that capture the problem's essence more elegantly.
	\end{enumerate*}
}
The human document undoubtedly trumps the agent's in terms of clarity and
conceptual cleanness.  
One qualitative difference
stems from the agent staying close to the existing work and \isabelleml implementation,
even when instructed to abstract from particularities; this complicates the reuse 
and invention of 
concepts
(such as maybe-types and completion relations $\compl$ from \cref{sec:human_paper_proof})
that capture the problem's essence more cleanly.
On the other hand, the AI development clearly 
wins cost-wise and time-wise:
The agent created an autoformalizable document
with a cost of 
\cost{70}, two hours of computation time
and 
one day of human review effort, while the human-authored document took approximately five days of human work.



\newcommand{\slop}[1]{\textcolor{blue}{#1}}
\section{Autoformalization in \isabellehol}
\label{sec:autoformalization}



Building on the results of the human and AI paper proofs, we conduct \emph{autoformalization} experiments for both approaches in \isabellehol.
The generated Isabelle theories are available on Zenodo~\cite{kappelmann_2026_19406624}\ifwithappendix, and Isabelle snippets with the final theorem statements can be found in \cref{app-detailsAutoformalization}\fi.
Our experiments are motivated by the following questions:
\begin{enumerate*}[(1)]
    \item Can the correctness of algorithms in the domain of programming language theory be formalized automatically, with \emph{reasonable costs and resources}?
    \item What are the \emph{limits} of state-of-the-art LLMs as \isabelle proof engineers?
    \item How does the autoformalization effort differ between mechanizing human expert proofs versus AI-generated proofs?
    Relatedly, can autoformalization serve as an effective tool for \emph{validating AI-generated results}?
\end{enumerate*}

\subsection{Technical setup}
\label{sec:llm_setup_isabelle}

Our setup extends our AI paper proof setup from \cref{sec:llm_setup}.
We connect \opencode to \isabelle via \emph{\iq}~\cite{isabelleq},
an MCP server for \jedit.
\iq lets the agent use \isabelle interactively,
with tool access, such as
modifying theory files,
fact search,
\sledgehammer,
and retrieval of proof states.
Inspired by prior autoformalization efforts~\cite{isatop},
we provide an \isabelle guidance file to the agent.
This guide contains both style guidelines,
\eg to create idiomatic \isar proofs,
and proof engineering 
patterns to facilitate efficient interaction with \iq.

\subsection{AI-authored proof autoformalization}\label{sec:llm_autoformalization}

First, we tasked \claude to mechanize the AI-authored proof.
%
The agent receives all input and generated artifacts from the AI paper experiment (\cref{sec:llm_paper_proof}).
Our \texttt{AGENTS.md} file instructs the agent to produce a formalization plan and implement it in \isabellehol.
We provide no problem-specific strategies for formalization.

From the initial prompt,
we obtain a \texttt{sorry}-free proof in \isabellehol that fills in details missing in the AI paper proof,
while also improving its structure:
\begin{enumerate*}[(1)]
    \item It uses an \isabelle \emph{locale}~\cite{locales} to introduce well-formedness assumptions on the inputs and to axiomatize the existence of most general types.
    \item Another locale is used to abstract the proof of the main theorems, fixing a set of locally minimal annotations covering all inference variables.
    \item To prove that each kept position is required for coverage, the agent identified a necessary, complex statement generalization, depicted in \cref{fig:gen_statement}.
\end{enumerate*}

\begin{figure}[t]
\centering
\includegraphics[alt={Isabelle code for lemma rg_fold_witness_aux. Assumes dist: distinct map fst of cands. And cnt_eq: for all alpha, cnt alpha equals the length of the filtered list of candidates where alpha is in K, plus extra alpha. And mem: the pair p, K is in the set of candidates. And kept: p is in fst of rg_fold cands cnt. Shows: there exists an alpha in K such that extra alpha equals zero, and for all p prime and K prime, if the pair p prime, K prime is in the set of candidates, and p prime is in fst of rg_fold cands cnt, and p prime does not equal p, then alpha is not in K prime.}, width=\textwidth]{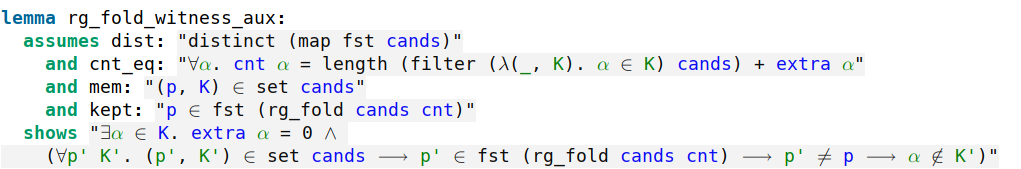}
\caption{AI-generalized statement for induction proof:
The non-generalized statement expresses that each position kept by the reverse-greedy algorithm (\texttt{rg\_fold})
must cover at least one type variable.
Here, \texttt{cands} is a list of candidate annotations with their tyvars,
\texttt{cnt} counts the occurrences of each tyvar in \texttt{cands}.
The agent noticed that an induction proof requires generalizing the statement to also count for each tyvar in how many kept positions it occurs (via \texttt{extra}).
}
\label[figure]{fig:gen_statement}
\end{figure}

However, the first human review identified several instances where the theorem statements in \isabellehol and the AI paper are not aligned:
\begin{enumerate*}[(1)]
    \item Completeness and minimality are proved only within the abstraction locale, but are not instantiated to obtain statements about the algorithm's actual output.
    \item The formalization claims to abstract from the order in which annotations are processed, but actually fixes a concrete post-order enumeration.
    This discrepancy still remains in the final version, despite human reviews repeatedly
    pointing to the problem.
    \item The AI paper uses the informal notion of \emph{consistency of a term with a set of annotations} in the algorithm's specifications.
    The formalization's definition seems to not match the paper's intended meaning,
    though the authors also argued whether the paper version is ambiguous and thus the true culprit.
\end{enumerate*}


Based on the review, the agent correctly instantiates the abstraction locale.
A second round of human reviews identified that the completeness statement
refers to the set of annotations determined by the algorithm
and not the output of the algorithm (as the paper version does).
In a final run, Claude aligns the completeness statement with the paper.
The agent also modifies the definition of consistency,
which remains suboptimal but captures the right meaning in the statements of completeness and minimality.



\subsection{Human-authored proof autoformalization}
\label{sec:human_proof_autoformalization}

As a second experiment,
we instructed the agent to mechanize the human proof,
which was explicitly created with autoformalization in mind.
%
The agent received as input a draft of \cref{sec:human_paper_proof} and the same prompt as in the first autoformalization experiment.
In round one, Claude produces \texttt{sorry}-free proofs which faithfully capture the underlying material,
except for a small but severe issue discovered in the review process:
As part of a locale definition, the agent introduces assumptions on $\coverageTest$ before defining the constant.
This allows the instantiation of these assumptions with arbitrary terms,
that could have been exploited to derive a contradiction,
but was not.
Missing concrete instantiations of the minimality and completeness statements constitute further minor issues.

In round two, the agent makes minor adjustments that address all issues raised in the review.
After the second run,
we happened to discover an improvement in the paper definition of well-typedness from \cref{subsec-terms}, namely the condition on the free typed variables from the typing of abstraction.
When informed about the change, the AI agent correctly updates the definition of well-typedness in the formalization and adapts the proof of the type preservation lemma from \cref{subsec-proofDevel}.

\subsection{Discussion}
Both autoformalizations successfully produced results that faithfully
capture a significant share of the underlying document,
with minor problems identified in human reviews.
Even though the AI-authored proof was not optimized for autoformalization and its mechanization required more human feedback identifying alignment errors,
the 
results are of comparable size and incur similar cost in terms of compute resources:
The human-authored proof mechanization spans 1938 lines at an LLM usage cost of \cost{139},
vs.\ 2071 lines and \cost{140} for the AI-authored proof (using \claude 4.6 at a cost of \cost{5} per 1M input and \cost{25} per 1M output tokens). %
%
Resource usage in terms of human work is more difficult to quantify,
but the AI paper proof mechanization demanded more human feedback in the review process,
as well as more rounds of review.
As the proofs themselves are checked for correctness by the proof assistant,
the reviews were focused on aligning theorem statements and definitions with the underlying paper,
where the more formal style of the human expert development proved advantageous.


\paragraph{Claude as a proof engineer.}
The proof documents are mostly idiomatic \isabelle developments:
They use locales, inductive predicates, and recursive data types,
as well as nested \isar proofs~\cite{Wenzel99} of significant length and complexity.
The proofs are generally of high quality and maintainable, yet at times overly verbose,
\eg several induction proofs with explicit case distinctions can easily be compressed into one-line proofs.
Moreover, the agent showed a preference towards object-level quantifiers over structured \isar statements,
making theorems harder to reuse (see \cref{fig:gen_statement} for an example).
The resulting mechanizations 
may not meet the highest quality standards of expert proof engineers, but can serve as a good baseline for further improvements.

The agent routinely ignored the prescribed gradual workflow,
instead producing large \isar proofs at once,
then fixing the resulting errors.
Still, human intervention into the proof engineering process was only required to remedy a misconception on non-terminating proof methods,
where the agent misinterpreted \iq's responses.
Learning from our experience, we provide an improved
\isabelle interaction guide for AI agents in the supplementary material.



\paragraph{Future directions.}
Our experiments are a playground for autoformalization with minimal dependencies and a simple background theory.
In future work, we plan to fully stress the AI agent's capabilities,
exploring navigability in large proof libraries, like the AFP~\cite{afp_website}
and complex proof developments in computer science.
A key challenge is reducing resource cost,
both in terms of compute and human feedback.
This might involve improving utilization of symbolic tools, such as \sledgehammer~\cite{PaulsonB10},
the \isabelle linter~\cite{linter},
and the development of specialized proof engineering LLMs.

\section{
	Autogeneralization based on Human Hints}
\label{sec:independence_system}

Our final experiment explores the AI agent's ability to separate the generic from the ad hoc component of the algorithm (\cref{sec-annotate}).
To this end,
we handed a hint to the AI agent to consider the following
independence system (IS),
which abstracts the type annotation problem:
$(S, F) \text{ where }  F = \{ F' \mid F' \subseteq S \land S \setminus F' \text{ covers } V\}$ and $S \setminus F$ corresponds to the to-be-annotated positions and $V$ to the input term's tyvars.
We then prompted the agent to reprove the results from the AI-authored paper proof (\Cref{sec:llm_paper_proof})
by taking advantage of the generic setting for the standard best-in-greedy algorithm for ISs~\cite[Chapter 13]{KorteVygenOptimisation}.
The AI agent succeeded in reproving the local minimality result
---that was the part benefitting most from the general results on the best-in-greedy algorithm.
The proof was 6.25 pages, computed at a price of \cost{11}
(while having full access to the original proof). 
Next, we prompted the agent to formalize this pen-and-paper proof in Isabelle/HOL, building on an existing Isabelle formalization of 
ISs~\cite{MatroidIsabelle}.
The agent was able to autoformalize the arguments of its own pen-and-paper proofs but, instead of importing the existing formalization, which we asked it to use, proceeded by copying definitions and lemmas.
The formalization (of both local minimality and completeness)
spans 1800 lines, in addition to 200 lines reused from the pre-existing formalization of ISs.
The cost of the entire experiment was \cost{93}.

This experiment was performed by an author with no prior familiarity with the correct printing problem who wanted to understand the problem abstractly by
formulating it in terms of ISs.
The AI agent demonstrated that it can drastically accelerate the process of understanding the problem by pinning down all the details of the connection to ISs
in less than three hours. 
%
Thus, our experiences, as well as those of others,
hint to
a role of LLMs as ``propped up'' automated proof methods or counter-example finders,
benefiting the economics of formal theorem proving and even the practice of mathematics.
We also hypothesize that user-provided hints make (formal) proving using LLMs cheaper.

\section{Related Work and Conclusion}
\label{sec-related}
Type annotations are typically considered as part of \emph{type inference}:
the process of translating a partially typed term from an external language
to a fully typed term in an internal language~\cite{local-type-inf}.
In this work, we consider the reverse direction:
the translation of a fully typed term to a partially typed one.
Since type inference for expressive languages, such as System F, is undecidable~\cite{system-f-infer-undecidable},
type inference algorithms for these languages are only partially complete:
for any well-typed $t$ of type $\tau$,
there is $t'$ with $t\compl t'$ such that running type inference on $t'$ yields $\tau$.
Completeness in this context is also called \emph{annotatability}~\cite{bi-typing-survey}.
Note that annotatability does not tell us \emph{where} type annotations are needed.
Xue and Oliveira~\cite{contextual-type-inf} hence call this \emph{weak annotatability},
differentiating it from \emph{strong annotatability} which also describes where annotations are needed.
Existing work on strong annotatability~\cite{local-contextual-type-inf,contextual-type-inf,bi-check-compl-high-rank,bi-typing-formal,spine-local-type-inf,mlf}
covers the completeness part of the 
correct printing problem
but, to our knowledge,
does not address minimality.
Unlike in our case, the languages used in many of these works do not admit most general well-typed completions.
As a result, annotation criteria are typically tied to the respective type inference algorithm with little control to specify favorable annotation positions,
whereas the approach studied here is independent of the inference algorithm and controllable via the $\pickPos$ function.

A related concern is the construction of correct pretty printers \cite{parsing-printing-unified,printing-correct}.
The focus there is to create infrastructure for correct-by-construction parser and printer combinators,
whereas we are concerned with the correctness of a particular pretty printer
that minimizes type annotations.

For pen-and-paper mathematics, LLMs have shown significant and accelerating progress (see Ju and Dong's survey~\cite{ai4mathsurvey}).
Applications include problems ranging from finding conjectures in research mathematics~\cite{daviesAdvancingMathematicsGuiding2021}, to solving mathematics olympiad problems~\cite{deepMindIMO}, all the way to proving open problems in theoretical computer science~\cite{knuth2026claudescycles}.
Our work here demonstrates the AI's abilities in a new domain of pen-and-paper mathematics: we show that off-the-shelf LLMs can
produce new pen-and-paper theorem statements and proofs in programming language metatheory while based only on informal hints as input, e.g.\ an algorithm lacking formal correctness specifications (as in \cref{sec:llm_paper_proof}) or a pointer to a
pen-and-paper description of a general concept (as in \cref{sec:independence_system}).
Our use case showed that this can be done within reasonable time and financial cost, advocating for LLMs to be used in an interactive loop with human researchers, \ie as a day-to-day proof automation tool, substantially accelerating the human's progress.

The area of autoformalization is rapidly growing.
Several fine-tuned models~\cite{deepseekprover,stepfun,goedelprover}
and sophisticated frameworks~\cite{lego,draft-sketch-prove}
have been developed.
We demonstrated that even a simple setup---using
an off-the-shelf AI coding environment, LLM, and Isabelle MCP server---%
can be sufficient for non-trivial autoformalization tasks.
Our setup was inspired by Urban's remarkable experiment~\cite{urban-topology},
who used an off-the-shelf LLM to autoformalize textbook results in topology
without human intervention.
Notable experiments in the area of programming languages
are by Xi and Odersky~\cite{agentic-proof-case-study}
and Ilya Sergey~\cite{move-borrow}.
Both present an iterative development
where humans wrote the definitions while agents handled the proof development.
This is unlike our work, which hands all steps to the LLM,
only providing human review once a complete formalization is obtained.

\paragraph*{Conclusion}\label{sec:conclusion}
\looseness=-1
We provide a formal account of the problem of sufficient and minimal type annotations for correct printing, 
in particular the Smolka-Blanchette algorithm,
with mechanized statements and correctness proofs. 
A main focus was on using LLMs as research assistants in the context of programming language research. 
Our case study reveals that state-of-the-art models and human insights can be complementary and therefore mutually beneficial.

Our work suggests that LLMs can already be used for day-to-day proof assistance, in the same way as existing Isabelle tools like Sledgehammer, Nitpick, and auto, but with a substantially broader scope and power. 
We also point to a promising future in which a deep integration of AI agents and proof assistants could catalyze both AI's and humans' capabilities for proving theorems.

\begin{credits}
\subsubsection*{\ackname} This research is conducted in the Copilots for Isabelle project under the AI for Math Fund, an initiative by Renaissance Philanthropy with support from XTX Markets. Our interest in adding type annotations during printing arose while implementing a copilot to translate 
apply-style proofs to readable Isar proofs. 
%

\subsubsection*{\discintname}
The authors have no competing interests to declare.
\end{credits}

\bibliographystyle{splncs04}
\bibliography{bib,long_paper}

\ifwithappendix
\appendix
\gdef\theHsection{\Alph{section}}

\begin{center} 
{\Large APPENDIX}
\end{center}

This appendix contains further details on the concepts and statements from the main paper. 
In case of acceptance, the appendix will be replaced by a technical report cited from the paper 
and made available online. 

\section{More Details on Syntax}
\label{app-detailsSyntax} 

The definition of substitution application $\_[\rho]$ on types consists of the following recursive equations:
\begin{itemize}
	\item $\alpha[\rho] = \rho(\alpha)$; 
	\item $(\sigma \Ra \tau)[\rho] = \sigma[\rho] \Ra \tau[\rho]$; 
	\item $((\sigma_1,\ldots,\sigma_n)\,\kappa)[\rho] = (\sigma_1[\rho],\ldots,\sigma_n[\rho])\,\kappa$. 
\end{itemize}

And similarly for the one on terms: 
\begin{itemize}
	\item $x_\xi[\rho] = x_{\xi[\rho]}$
	\item $c_\xi[\rho] = c_{\xi[\rho]}$
	\item $(t_1\,t_2)_\xi[\rho] = (t_1[\rho]\,t_2[\rho])_{\xi[\rho]}$ 
	\item $(\lambda x_\xi.\,t)_\zeta[\rho] = (\lambda x_{\xi[\rho]}.\,t[\rho])_{\zeta[\rho]}$ 
\end{itemize}

The definition of $\TV$ (the occurring tyvars operator) on types consists of the following recursive equations: 
\begin{itemize}
	\item $\TV(\alpha) = \{\alpha\}$; 
	\item $\TV(\sigma \Ra \tau) = \TV(\sigma) \cup \TV(\tau)$; 
	\item $\TV((\sigma_1,\ldots,\sigma_n)\,\kappa) = \bigcup_{i=1}^n \TV(\sigma_i)$. 
\end{itemize}

And similarly for the one on terms: 
\begin{itemize}
	\item $\TV(x_\xi) = \TV(\xi)$
	\item $\TV(c_\xi) = \TV(\xi)$
	\item $\TV((t_1\,t_2)_\xi) = \TV(t_1) \cup \TV(t_2) \cup \TV(\xi)$ 
	\item $\TV((\lambda x_\xi.\,t)_\zeta) = \TV(\xi) \cup \TV(t) \cup \TV(\zeta)$ 
\end{itemize}

\bigskip
We write $[n_1,\ldots,n_k]$
for the list consisting of $n_1,\ldots,n_k$ (in particular, $[]$ for the empty list)
and $n \cdot l$ for consing an element $n$ to a list $l$.
The set of positions of a term, $\Poss(t)$, is defined by:
\begin{align*}
	&\Poss(x_\xi) = \Poss(c_\xi) = \{[]\}
	\qquad \Poss((t\,s)_\xi) = \{[], 1 \cdot \Poss(t), 2 \cdot \Poss(s)\}\\
	&\Poss((\lambda x_\xi.\;t)_\zeta) = \{[], [1], 2 \cdot \Poss(t)\}
\end{align*}

Each position of a term uniquely identifies a location of a subterm or a binding variable. If $p \in \Poss(t)$, the \emph{maybe-type of (the subterm at) position $p$ in $t$}, $\mtpOf(t,p)$, is defined recursively:
\vspace*{-2ex}
\begin{align*}
	&\mtpOf(x_\xi,p) = \xi \qquad &\mtpOf((t_1\,t_2)_\xi,p) =
	\begin{cases}
		\mtpOf(t_1,p'), & \text{if } p = 1 \cdot p' \\
		\mtpOf(t_2,p'), & \text{if } p = 2 \cdot p' \\
		\xi, & \text{otherwise}
	\end{cases}\\
	&\mtpOf(c_\xi,p) = \xi \qquad&\mtpOf((\lambda x_\xi.\,t)_\zeta,p) =
	\begin{cases}
		\xi,   & \text{if } p = [1] \\
		\mtpOf(t,p'), & \text{if } p = 2 \cdot p' \\
		\zeta, & \text{otherwise}
	\end{cases}
\end{align*}

The annotation deletion operator $t[p:=\bot]$ has the following recursive definition:
\begin{itemize}
	\item[] $x_\xi[p:= \bot]$ $=$ $x_\bot$
	\hspace*{10ex}  $c_\xi[p:= \bot]$ $=$ $c_\bot$
	\item[] $(t_1\,t_2)_\xi[p:= \bot]$ $=$ 
	$\left\{\begin{array}{ll}
		(t_1[p':= \bot]\,t_2)_\xi, & \mbox{ if $p$ has the form $1 \cdot p'$}
		\\
	 (t_1\,(t_2[p':= \bot]))_\xi, & \mbox{ if $p$ has the form $2 \cdot p'$}
		\\
		(t_1\,t_2)_\bot, \mbox{ otherwise}
	\end{array} \right.$  
	\item[] $(\lambda x_\xi.\,t)_\zeta[p:= \bot]$ $=$  
	$\left\{\begin{array}{ll}
		(\lambda x_\bot.\;t)_\zeta, & \mbox{ if $p=1$}
		\\
		(\lambda x_\xi.\;(t[p':= \bot]))_\zeta, & \mbox{ if $p$ has the form $2 \cdot p'$}
		\\
		(\lambda x_\xi.\,t)_\bot, \mbox{ otherwise}
	\end{array} \right.$   
\end{itemize}

\section{More Details on the Smolka-Blanchette Algorithm}
\label{app-moreSmplkaBlanchette} 

The definition of $\decrease$ is correct (i.e., its recursion terminates) because, in the recursive call,
the number of positions $p$ in $s$ such that $\mtpOf(s,p) \not = \bot$, \ie
$|\Poss_{\not=\bot}(s)|$ where $\Poss_{\not=\bot}(s) = \{p \in \Poss(s) \mid\mtpOf(s,p) \not = \bot\}$, decreases by 1. %
\looseness=-1

\section{More Details on the Human-Authored Paper Proof Development}
\label{app-detailsHumanProofs}

\subsection{Constructor-aware inversion lemmas} 
\label{appsubsec-inversions}

\begin{mylemma}\rm \label{lem-invTyping} \rm (term-constructor-guided inversion rules for $\vdash$)
	The following hold: 
	\\(1) If $\vdash c_\sigma$ then $\sigma \leq \ctpOf(c)$  
	\\(2) If $\vdash (t_1\,t_2)_\sigma$ then there exists $\tau$ such that $\vdash t_1$, $\vdash t_2$ and 
	$\tpOf(t_1) = \tpOf(t_2) \Ra\sigma$. 
	\\(3) If $\vdash (\lambda x_\sigma.\,t)_\tau$ then $\vdash t$ and $\tau = \sigma \Ra \tpOf(t)$. 
\end{mylemma}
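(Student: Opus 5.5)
These inversion rules follow from rule inversion on the inductive definition of $\vdash$, together with freeness (injectivity and distinctness) of the term constructors. The typing predicate is defined by four rules whose conclusions have pairwise distinct top-level constructors: variable, constant, application, abstraction. So whenever $\vdash w$ holds with $w$ of a given shape, only one rule can have produced it. Note that in (2) and (3) the $t_i$ and $t$ are F-terms, since $\vdash$ is only defined on $\FTerm$; hence $\tpOf(t_i)$ and $\tpOf(t)$ make sense. I will handle the three items separately, each by case analysis on the last rule used to derive the hypothesis.

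\textbf{(1)} Assume $\vdash c_\sigma$. The variable, application and abstraction rules have conclusions $x_\tau$, $(u\,v)_\tau$ and $(\lambda x_\tau.\,u)_{\tau'}$. Each differs from a constant instance by constructor distinctness, so those cases are vacuous. In the constant rule, the conclusion $c'_{\sigma'}$ equals $c_\sigma$, and injectivity gives $c'=c$ and $\sigma'=\sigma$. The rule's premise $\sigma' \leq \ctpOf(c')$ then becomes $\sigma \leq \ctpOf(c)$.

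\textbf{(2)} Assume $\vdash (t_1\,t_2)_\sigma$. Only the application rule matches. Its conclusion $(u\,v)_{\sigma'}$ equals $(t_1\,t_2)_\sigma$, and injectivity gives $u=t_1$, $v=t_2$, $\sigma'=\sigma$. The premises $\vdash u$, $\vdash v$ and $\tpOf(u)=\tpOf(v)\Ra\sigma'$ transfer directly. The existential $\tau$ in the statement is superfluous: it can be taken arbitrary, or read as $\tpOf(t_2)$.

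\textbf{(3)} Assume $\vdash (\lambda x_\sigma.\,t)_\tau$. Only the abstraction rule matches, with conclusion $(\lambda y_{\sigma'}.\,u)_{\sigma'\Ra\tpOf(u)}$. Injectivity of the abstraction constructor gives $y=x$, $\sigma'=\sigma$, $u=t$ and $\tau=\sigma'\Ra\tpOf(u)=\sigma\Ra\tpOf(t)$. The premise $\vdash u$ yields $\vdash t$. The freshness premise on $\FTV(u)$ is simply discarded, since the statement does not ask for it.

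There is no real obstacle here. The one point needing care is that the abstraction rule's conclusion carries a computed decoration $\sigma\Ra\tpOf(u)$ rather than a free metavariable. The inversion therefore has to be phrased as an equation $\tau = \sigma\Ra\tpOf(t)$ obtained from injectivity, not as pattern matching on $\tau$. In Isabelle this is exactly what \texttt{inductive\_cases} generates.
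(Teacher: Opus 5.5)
Your proposal is correct and matches the paper's proof, which likewise derives all three items directly from the standard inversion rule of the inductive definition of $\vdash$ together with injectivity and distinctness of the syntactic term constructors. One small labeling slip: the premise you discard in (3) is the binding-consistency condition $\forall x_\tau \in \FTV(u).\;\tau = \sigma$, not a freshness condition, but this does not affect the argument.
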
 
\begin{proof} 
	Follows from the standard inversion rule associated to the inductive definition 
	of $\vdash$ and the injectiveness and distinctness of the syntactic term constructors. 
	\qed 
\end{proof}

\begin{mylemma}\rm \label{lem-leftInvCompl} \rm (term-constructor-guided left inversion rules for $\compl$)
	The following hold: 
	\\(1) If $x_\xi \compl t$ then there exists $\xi'$ such that $\xi \compl \xi'$ and $t = x_{\xi'}$. 
	\\(2) If $c_\xi \compl t$ then there exists $\xi'$ such that $\xi \compl \xi'$ and $t= c_{\xi'}$. 
	\\(3) If $(s_1\,s_2)_\xi \compl t$ then there exist $s_1'$, $s_2'$ and $\xi'$ such that $s_1 \compl s_1'$, $s_2 \compl s_2'$, 
	$\xi \compl \xi'$ and $t= (s_1'\,s_2')_{\xi'}$. 
	\\(4) If $(\lambda x_\zeta.\;s)_\xi \compl t$ then there exist $s'$, $\zeta'$ and $\xi'$ 
	such that $s \compl s'$,  
	$\zeta \compl \zeta'$, $\xi\compl \xi'$ and $t= (\lambda {x_{\zeta'}}.\;s')_{\xi'}$. 
\end{mylemma}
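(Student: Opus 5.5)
Each of the four clauses follows from the standard inversion (elimination) principle of the inductive definition of $\compl$, together with the freeness of the term constructors. Concretely, for clause (4), suppose $(\lambda x_\zeta.\;s)_\xi \compl t$. The elimination rule for $\compl$ says that this derivation ends in one of the four rules, so the left-hand side is respectively $x'_{\xi_0}$, $c_{\xi_0}$, $(s_1\,s_2)_{\xi_0}$, or $(\lambda y_{\zeta_0}.\;s_0)_{\xi_0}$, with the corresponding right-hand side and premises.

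The first three cases are impossible. There the left-hand side would be a $\lambda$-abstraction equal to a term built by a different constructor, which contradicts the distinctness of constructors. In the remaining case, injectivity of the abstraction constructor gives $y = x$, $\zeta_0 = \zeta$, $s_0 = s$ and $\xi_0 = \xi$. The rule's conclusion then has $t = (\lambda x_{\zeta'}.\;s')_{\xi'}$, and its premises read $s \compl s'$, $\zeta\compl\zeta'$ and $\xi\compl\xi'$. We take these $s'$, $\zeta'$, $\xi'$ as witnesses.

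Clauses (1)--(3) go the same way. In each, exactly one rule has a conclusion whose left-hand side starts with the matching constructor: the variable rule, the constant rule, or the application rule. Injectivity identifies that rule's metavariables with the given $x$ (or $c$), $\xi$, and $s_1,s_2$, and the premises are exactly the required subsumptions. In clause (1) we also need the bound variable name to be preserved, and injectivity of the variable constructor gives this.

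No step is a real obstacle; the only care needed is matching variable names and binder annotations correctly. In Isabelle, the command \texttt{inductive\_cases} produces these rules directly.
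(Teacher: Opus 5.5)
Your proof is correct and matches the paper's argument: the paper likewise obtains all four clauses from the standard inversion (elimination) rule of the inductive definition of $\compl$, combined with the injectiveness and distinctness of the syntactic term constructors. The paper also notes that Isabelle's \texttt{inductive\_cases} produces these rules automatically.
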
 
\begin{proof} 
	Follows from the standard inversion rule associated to the inductive definition 
	of $\compl$ and the injectiveness and distinctness of the syntactic term constructors. 
	\qed 
\end{proof} 

\begin{mylemma}\rm \label{lem-rightInvCompl} \rm (term-constructor-guided right inversion rules for $\compl$)
	The following hold: 
	\\(1) If $t \compl x_\xi$ then there exists $\xi'$ such that $\xi' \compl \xi$ and $t= x_{\xi'}$. 
	\\(2) If $t \compl c_\xi$ then there exists $\xi'$ such that $\xi' \compl \xi$ and $t= c_{\xi'}$. 
	\\(3) If $t \compl (s_1\,s_2)_\xi$ then there exist $s_1'$, $s_2'$ and $\xi'$ such that $s_1' \compl s_1$, $s_2' \compl s_2$, 
	$\xi' \compl \xi$ and $t= (s_1'\,s_2')_{\xi'}$. 
	\\(4) If $t \compl (\lambda x_\zeta.\;s)_\xi$ then there exist $s'$, $\zeta'$ and $\xi'$ 
	such that $s' \compl s$,  
	$\zeta' \compl \zeta$, $\xi'\compl \xi$ and $t= (\lambda {x_{\zeta'}}.\;s')_{\xi'}$. 
\end{mylemma}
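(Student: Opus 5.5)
Each of the four items follows from the standard inversion (elimination) rule of the inductive definition of $\compl$, combined with the injectiveness and distinctness of the term constructors. This mirrors the proof of Lemma~\ref{lem-leftInvCompl}, except that the constructor now fixes the \emph{right} argument of $\compl$. Concretely, I assume $t \compl r$, where $r$ is one of $x_\xi$, $c_\xi$, $(s_1\,s_2)_\xi$ or $(\lambda x_\zeta.\,s)_\xi$, and I case-split on which of the four rules defining $\compl$ derived this judgment.

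In each rule, the two sides of the conclusion carry the same top-level constructor. Therefore, if the right-hand side equals $r$, distinctness of constructors excludes every rule except the one matching $r$'s constructor. For that rule, injectivity identifies the right-hand components with those of $r$. For instance, in case (3) the application rule gives $t = (s_1'\,s_2')_{\xi'}$ and $r = (s_1''\,s_2'')_{\xi''}$ with $s_1'\compl s_1''$, $s_2'\compl s_2''$ and $\xi'\compl\xi''$. Injectivity then yields $s_1''=s_1$, $s_2''=s_2$ and $\xi''=\xi$, which supplies the required witnesses $s_1'$, $s_2'$, $\xi'$. Cases (1), (2) and (4) go the same way. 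In case (4), injectivity of the abstraction constructor also fixes the bound variable name $x$ on the left-hand side, because $\compl$ relates only abstractions over the same variable.

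There is no real obstacle; the argument is pure bookkeeping. The only point needing care is the naming: the primed variables in the statement belong to the \emph{smaller} term $t$, so the existential witnesses must be taken from the left-hand side of the applied rule, not the right. In Isabelle this is exactly what \texttt{inductive\_cases} generates, with the right argument instantiated to each constructor pattern.
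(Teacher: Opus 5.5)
Your proposal is correct and matches the paper's proof. Both derive the four rules from the inversion (elimination) rule of the inductive definition of $\compl$, together with injectiveness and distinctness of the term constructors, and your worked case (3) and the remark on the bound variable in case (4) simply spell out bookkeeping the paper leaves implicit.
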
 
\begin{proof} 
	Follows from the standard inversion rule associated to the inductive definition 
	of $\compl$ and the injectiveness and distinctness of the syntactic term constructors. 
	\qed 
\end{proof}

\subsection{Lemmas on syntax basics} 
\label{appsubsec-syntaxBasics}

\begin{mylemma}\rm \label{lem-typeBasics}
	The following hold:  
	\\(1) $\tau[\rho \scmp \rho'] = \tau[\rho'][\rho]$. 
	\\(2) If $\beta\notin \TV(\tau)$, then $\tau[\beta/\alpha][\alpha/\beta] = \tau$. 
	\\(3) $\tau[\rho] = \tau[\rho']$ iff $\forall \alpha \in \TV(\tau).\,\rho(\alpha) = \rho'(\alpha)$. 
	\\(4) $\TV(\tau[\rho]) =  \bigcup_{\alpha \in \TV(\tau)} \!\TV(\rho(\alpha))$.
	\\(5) $\TV(\tau)$ is finite. 
	\\(6) $\leq$ is a preorder on types. 
	\\(7) Assume that $\forall \alpha \in \TV(\tau).\;\rho(\alpha) \leq \rho'(\alpha)$. 
	Then $\tau[\rho] \leq \tau[\rho']$.  
\end{mylemma}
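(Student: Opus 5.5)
The plan is to prove all seven items by structural induction on $\tau$, following the three clauses of the type syntax (tyvar, $\Rightarrow$, and $\kappa$-application), and using the recursive equations of substitution and $\TV$ from \cref{app-detailsSyntax}. Items (1), (4) and (5) are direct inductions. For (1) with $\tau=\alpha$, both sides are $\rho'(\alpha)[\rho]$ by the definition of $\scmp$; the other two cases follow from the induction hypotheses. Item (4) has the base case $\TV(\rho(\alpha))$ on both sides, and the compound cases distribute the union over the components. Item (5) is immediate, since each clause is a finite union.

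For (3), I will prove both directions in one induction.
\begin{itemize}
\item \emph{Right-to-left (extensionality).} The base case is immediate. In the compound cases the hypothesis restricts to each component, because $\TV$ of a component is contained in $\TV(\tau)$.
\item \emph{Left-to-right (the converse).} The base case is $\alpha[\rho]=\alpha[\rho']$, which means $\rho(\alpha)=\rho'(\alpha)$. In the compound cases I use injectivity of $\Rightarrow$ and of $(\ldots)\,\kappa$ to split the equation into componentwise equations. The induction hypotheses then cover every $\alpha$ in the union.
\end{itemize}
Item (2) then follows from (1) and (3). The composite $\alpha/\beta \scmp \beta/\alpha$ maps $\alpha\mapsto\alpha$, $\beta\mapsto\alpha$, and fixes all other tyvars. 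It therefore agrees with the identity on $\TV(\tau)$ whenever $\beta\notin\TV(\tau)$. Alternatively, (2) can be done by direct induction. The degenerate case $\alpha=\beta$ is trivial.

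Item (6) is a preorder check. Reflexivity uses the identity substitution, which has finite support. Transitivity: if $\tau_1=\tau_2[\rho_1]$ and $\tau_2=\tau_3[\rho_2]$, then by (1) $\tau_1=\tau_3[\rho_1\scmp\rho_2]$. I need to check that $\rho_1\scmp\rho_2$ again has finite support. Its support lies within $\mathrm{supp}(\rho_1)\cup\mathrm{supp}(\rho_2)$: if $\alpha$ is in neither, then $\rho_2(\alpha)[\rho_1]=\alpha[\rho_1]=\alpha$.

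Item (7) is the main obstacle, because the witnessing substitutions $\rho_\alpha$ with $\rho(\alpha)=\rho'(\alpha)[\rho_\alpha]$ differ from one $\alpha$ to the next. A single substitution $\theta$ with $\tau[\rho]=\tau[\rho'][\theta]$ need not exist if the images $\rho'(\alpha)$ share tyvars. A counterexample: $\tau=\alpha\Rightarrow\beta$, $\rho'(\alpha)=\rho'(\beta)=\gamma$, and $\rho(\alpha)=\nat$, $\rho(\beta)=\bool$. Each instance condition holds, yet $\nat\Rightarrow\bool\not\leq\gamma\Rightarrow\gamma$. So my first step is to check whether the statement intends an additional hypothesis, such as a common witness, or disjointness of the sets $\TV(\rho'(\alpha))$ for distinct $\alpha\in\TV(\tau)$.

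Under a common-witness reading, $\rho(\alpha)=\rho'(\alpha)[\theta]$ for all $\alpha\in\TV(\tau)$. Then (1) and (3) give $\tau[\rho]=\tau[\theta\scmp\rho']=\tau[\rho'][\theta]$, as required. Under the disjointness reading, I would glue the $\rho_\alpha$ into a single $\theta$ by taking $\theta=\rho_\alpha$ on $\TV(\rho'(\alpha))$ and the identity elsewhere. This $\theta$ has finite support by (5), and the common-witness argument then applies. I would state the extra hypothesis explicitly before proving (7).
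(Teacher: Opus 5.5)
Your treatment of (1)--(6) follows the paper's route. The paper proves (1)--(5) by structural induction on $\tau$, using injectivity of the type constructors for the converse direction of (3), and obtains (6) from (1). Deriving (2) from (1) and (3) rather than by direct induction, and checking that $\rho_1\scmp\rho_2$ has finite support, are harmless refinements of details the paper leaves implicit.

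On (7) you are right and the paper is wrong. The paper asserts that (7) also follows by straightforward structural induction on $\tau$. That induction breaks at the $\Rightarrow$ and $\kappa$ cases for exactly the reason you give: the induction hypotheses supply separate witnesses for each component, and these cannot be merged when the components of $\tau[\rho']$ share tyvars.

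Your counterexample is valid, and it does not need $\nat$ or $\bool$ to exist in the type structure. Take $\rho$ to be the identity and $\rho'(\alpha)=\rho'(\beta)=\gamma$. The hypothesis holds, but $\alpha\Rightarrow\beta\not\leq\gamma\Rightarrow\gamma$.

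Both of your repaired versions are proved correctly. In the disjointness case, the glued $\theta$ has finite support by (5), and (3) transfers its agreement with each $\rho_\alpha$ on $\TV(\rho'(\alpha))$.

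The error does no downstream damage. Nothing else in the paper invokes (7), and the term-level counterpart of this lemma states only items (1)--(6).
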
 
\begin{proof}
	(1)--(5) and (7) follow by straightforward structural induction over $\tau$, using the definitions of $\TV$ and $\_[\_]$ on types. (3) also needs the injectiveness and distinctness properties of the syntactic constructors for types. 
	\\
	(6) follows immediately from (1). 
	\qed
\end{proof}

\begin{mylemma}\rm \label{lem-termBasics}
	The following hold for all unambiguous terms $t$:  
	\\(1) $t[1_\TVar] = t$ and $t[\rho \scmp \rho'] = t[\rho'][\rho]$. 
	\\(2) If $\beta\notin \TV(t)$, then $t[\beta/\alpha][\alpha/\beta] = t$. 
	\\(3) $t[\rho] = t[\rho']$ iff $\forall \alpha \in \TV(t).\,\rho(\alpha) = \rho'(\alpha)$.\footnote{This ``iff'' version covers both substitution extensionality and its converse.} 
	In particular, $t[\rho] = t$ iff $\forall \alpha \in \TV(t).\,\rho(\alpha) = \alpha$. 
	\\(4) $\TV(t[\rho]) = \bigcup_{\alpha \in \TV(t)} \!\TV(\rho(\alpha))$.
	\\(5) $\TV(t)$ is finite. 
	\\(6) $\leq$ is a preorder. 
	\leftOut{These are not needed for the unoptimized verion: 
		, $\equiv$ an equivalence, and $<$ a strict partial order on terms. 
	\\(7) Assume that $\forall \alpha \in \TV(t).\;\rho(\alpha) \leq \rho'(\alpha)$. 
	Then $t[\rho] \leq t[\rho']$. 
	\\(8) Assume that $\forall \alpha \in \TV(t).\;\rho(\alpha) \leq \rho'(\alpha)$ and $\exists \alpha \in \TV(t).\;\rho(\alpha) < \rho'(\alpha)$. 
	Then $t[\rho] < t[\rho']$. 
	\\ 
	(9)  $t[\rho]$ is unambiguous iff 
	$t$ is unambiguous and 
	$\rho(\alpha)$ is unambiguous for all $\alpha\in \TV(t)$.  
} 
\end{mylemma}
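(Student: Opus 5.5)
Each of the six parts will be proved by structural induction on $t$, following the four term constructors and the recursive definitions of $\_[\_]$ and $\TV$ on terms from \cref{app-detailsSyntax}. In each case the work is pushed down to the corresponding type-level fact in Lemma~\ref{lem-typeBasics}. The variable and constant cases apply the type lemma to the single decoration. The application and abstraction cases combine the induction hypotheses for the immediate subterms with the type lemma for the one or two decorating maybe-types. All type-level statements first have to be extended to maybe-types. This is immediate from $\bot[\rho]=\bot$ and $\TV(\bot)=\emptyset$. For instance, (3) for $\xi=\bot$ holds vacuously on both sides. Unambiguity plays no role in these purely structural facts, since substitution acts only on annotations and never renames term variables. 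It only needs to be carried through the induction, using that subterms of unambiguous terms are unambiguous.

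For (1), identity and compositionality follow by induction, using Lemma~\ref{lem-typeBasics}(1) and the evident identity law $\tau[1_\TVar]=\tau$, which is a one-line induction on types. For (2), the hypothesis $\beta\notin\TV(t)$ propagates to every subterm and every decoration, because $\TV$ of a compound term is the union of the pieces. Lemma~\ref{lem-typeBasics}(2) then applies componentwise. For (4) and (5), the union formula and finiteness distribute over unions of the components, using Lemma~\ref{lem-typeBasics}(4),(5). For (6), reflexivity comes from (1) with $1_\TVar$. Transitivity comes from the composition law: if $t\leq_{\rho} s$ and $s\leq_{\rho'} r$, then $t=r[\rho'][\rho]=r[\rho\scmp\rho']$.

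The only step needing a little care is (3), especially the ``only if'' direction, which is the converse of extensionality. Suppose $t[\rho]=t[\rho']$. I will use injectivity of the term constructors to split the equality into equalities of the corresponding subterms and decorations. For an application, for example, $(t_1[\rho]\,t_2[\rho])_{\xi[\rho]}=(t_1[\rho']\,t_2[\rho'])_{\xi[\rho']}$ gives $t_i[\rho]=t_i[\rho']$ and $\xi[\rho]=\xi[\rho']$. The induction hypotheses and Lemma~\ref{lem-typeBasics}(3) then give agreement of $\rho,\rho'$ on each of $\TV(t_1)$, $\TV(t_2)$ and $\TV(\xi)$, hence on their union $\TV(t)$. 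The ``if'' direction is the dual computation. The ``in particular'' clause is (3) instantiated with $\rho'=1_\TVar$, combined with $t[1_\TVar]=t$ from (1).
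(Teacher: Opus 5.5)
Your proposal is correct and follows the paper's proof. Like the paper, you prove (1)--(5) by structural induction on $t$, reducing each case to the corresponding type-level fact of Lemma~\ref{lem-typeBasics}, with injectivity of the term constructors for the converse direction of (3); you derive the ``in particular'' clause of (3) from (3) with $\rho'=1_\TVar$ together with (1), and you get (6) from (1). Your remarks that the type facts extend trivially to maybe-types and that unambiguity plays no role are accurate details the paper leaves implicit.
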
 
\begin{proof}
	(1)--(5) 
	follow by straightforward structural induction over $t$, using the definitions of $\TV$ and $\_[\_]$ on terms,
	and the corresponding fact for types (from Lemma~\ref{lem-typeBasics}).  Again, (3) needs injectiveness and distinctness of the syntactic constructors for terms.  The second part of (3) follows from the first part of (3) and the first part of point (1). 
	\\
	(6) again follows easily from (1). 
	\qed
\end{proof}

\subsection{Annotation lemmas} 
\label{appsubsec-annotationLemmas}

\begin{mylemma}\rm \label{lem-basicsPosUpdate}
	Assume $t\in \Term$ and $p\in\Poss(t)$. Then the following hold:
	\\(1) $t[p:= \bot] \compl t$.
	\\(2) $\mtpOf(t[p:=\bot],p) = \bot$. 
	\\(3) For all $q \in \Poss(t) \sm \{p\}$, $\mtpOf(t[p:=\bot],q) = \mtpOf(t,q)$.
	\\(4) If $t$ is unambiguous, then $t[p:= \bot]$ is unambiguous. 
\end{mylemma}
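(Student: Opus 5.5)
We prove all four claims by structural induction on $t$, generalizing over the position $p$. In each inductive step we case-split on the shape of $p$ following the clauses defining $t[p:=\bot]$ and $\mtpOf$: either $p=[]$, or $p=[1]$ in the abstraction case, or $p=i\cdot p'$ in the application and abstraction cases. In the variable and constant cases $\Poss(t)=\{[]\}$, so $p=[]$ and all four claims are immediate from the definitions, using $\bot\compl\xi$ for (1).

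\textbf{Claim (1).} In the root case ($p=[]$, or $p=[1]$ for the binder), the deleted decoration $\xi$ is replaced by $\bot$ and the rest of the term is unchanged. The corresponding $\compl$ rule then applies, using $\bot\compl\xi$ for the changed decoration and reflexivity of $\compl$ for the others; reflexivity on terms is itself a quick structural induction. In the case $p=i\cdot p'$, we have $p'\in\Poss(t_i)$, and the induction hypothesis gives $t_i[p':=\bot]\compl t_i$. Reflexivity on the untouched subterm and decorations then lets us apply the application or abstraction rule of $\compl$.

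\textbf{Claims (2) and (3).} Claim (2) is a direct computation: the defining clauses of $t[p:=\bot]$ and $\mtpOf$ descend along the same path, and at the end we either read the erased decoration or apply the induction hypothesis. Claim (3) is the only step needing care. For $q\ne p$, we do a nested case split on $q$.
\begin{itemize}
\item If $p$ and $q$ descend into different children, or one of them is a root or binder position and the other is not, then $\mtpOf$ at $q$ reads a part of the term that $t[p:=\bot]$ left untouched.
\item If both are of the form $i\cdot p'$ and $i\cdot q'$ with the same $i$, then $p'\ne q'$. We also need $q'\in\Poss(t_i)$, which follows from $q\in\Poss(t)$ by the definition of $\Poss$. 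The induction hypothesis then applies.
\end{itemize}
The main bookkeeping obstacle is making sure that the ``otherwise'' clauses of $\mtpOf$ and of $t[p:=\bot]$ line up; this is why we restrict to $p\in\Poss(t)$. Concretely, in the abstraction case a position such as $[1,1]$ is not in $\Poss$, so it never reaches the ``otherwise'' branch unexpectedly. To handle this cleanly, we will also record the easy auxiliary fact $\Poss(t[p:=\bot])=\Poss(t)$, proved by the same induction.

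\textbf{Claim (4).} Erasing a decoration changes neither the set of bound variable names nor the positions at which binders $\lambda x$ occur. Unambiguity, as defined, depends only on which variables are bound at which positions, since it quantifies over arbitrary $\xi,\zeta$. Claim (4) therefore follows by the same induction, using the inductive characterization of unambiguity: each subterm is unambiguous, and the bound variable does not occur bound in the body. Alternatively, one can first show that the multiset of binder names is invariant under $\_[p:=\bot]$, by induction, and conclude from that.
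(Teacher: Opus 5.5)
Your proposal is correct and is essentially the paper's proof: a structural induction on $t$, generalized over $p$, that unfolds the definitions of $\_[\_:=\bot]$, $\mtpOf$ and unambiguity. Two small things need tidying: in (3) your case split should also cover the distinct pair $[]$, $[1]$ in the abstraction case, which is trivial; and in (4) the inductive characterization of unambiguity must also require, for applications, that the two subterms share no binder names, so the invariance of binder names under deletion (which you offer only as an alternative) is actually needed on either route.
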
 
\begin{proof}
	By straightforward induction on $t$, using the definitions of $\_[\_:=\_]$ and $\mtpOf$, and that of unambiguity. 
	\qed
\end{proof}

\subsection{The type preservation lemma} 
\label{appsubsec-typepresLemma}

\begin{mylemma} \rm \label{lem-subst}  
	Assume $v$ is an unambiguous F-term such that $\vdash v$ and $\rho$ a substitution.
	Then $\vdash v[\rho]$. 
\end{mylemma}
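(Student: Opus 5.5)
We proceed by induction on the derivation of $\vdash v$, following the four typing rules. Throughout we use that substitution commutes with the term constructors (the equations of \cref{app-detailsSyntax}), that $\tpOf(u[\rho]) = \tpOf(u)[\rho]$ (an instance of the annotation lemma $\mtpOf(t[\rho],p) = \mtpOf(t,p)[\rho]$ at $p = []$), and that subterms of unambiguous terms are unambiguous. The last fact lets the induction hypothesis apply to the premises.

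\emph{Variable case.} Here $v = x_\sigma$, so $v[\rho] = x_{\sigma[\rho]}$, which is well-typed by the variable rule with no side condition.

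\emph{Constant case.} Here $v = c_\sigma$ with $\sigma \leq \ctpOf(c)$, say $\sigma = \ctpOf(c)[\rho_0]$. Then $\sigma[\rho] = \ctpOf(c)[\rho_0][\rho] = \ctpOf(c)[\rho \scmp \rho_0]$ by compositionality (Lemma~\ref{lem-typeBasics}(1)). Hence $\sigma[\rho] \leq \ctpOf(c)$, and the constant rule applies. We should also check that $\rho \scmp \rho_0$ has finite support; this holds because both $\rho$ and $\rho_0$ have finite support.

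\emph{Application case.} Here $v = (u\,w)_\sigma$ with $\vdash u$, $\vdash w$ and $\tpOf(u) = \tpOf(w) \Ra \sigma$. The induction hypothesis gives $\vdash u[\rho]$ and $\vdash w[\rho]$. Moreover,
$\tpOf(u[\rho]) = \tpOf(u)[\rho] = \tpOf(w)[\rho] \Ra \sigma[\rho] = \tpOf(w[\rho]) \Ra \sigma[\rho]$,
so the application rule yields $\vdash (u[\rho]\,w[\rho])_{\sigma[\rho]} = v[\rho]$.

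\emph{Abstraction case.} Here $v = (\lambda x_\sigma.\,u)_{\sigma \Ra \tpOf(u)}$ with $\vdash u$ and $\forall x_\tau \in \FTV(u).\,\tau = \sigma$. We have $v[\rho] = (\lambda x_{\sigma[\rho]}.\,u[\rho])_{\sigma[\rho] \Ra \tpOf(u)[\rho]}$, and the outer decoration equals $\sigma[\rho] \Ra \tpOf(u[\rho])$ as the rule requires. The induction hypothesis gives $\vdash u[\rho]$. It remains to discharge the side condition for $u[\rho]$, which is the step I expect to be the main obstacle: it requires an auxiliary lemma describing free typed variables under substitution.

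The auxiliary lemma states: if $x_{\tau'} \in \FTV(u[\rho])$, then there is $\tau$ with $x_\tau \in \FTV(u)$ and $\tau' = \tau[\rho]$. Concretely, $\FTV(u[\rho]) = \{y_{\tau[\rho]} \mid y_\tau \in \FTV(u)\}$. We prove it by structural induction on $u$. The key point is that substitution changes only type decorations and never variable names or binding structure, so whether an occurrence is free is unaffected. Unambiguity is not essential here, but it is harmless. Given this lemma, any $x_{\tau'} \in \FTV(u[\rho])$ comes from some $x_\tau \in \FTV(u)$ with $\tau = \sigma$, hence $\tau' = \sigma[\rho]$. This is exactly the side condition, so the abstraction rule gives $\vdash v[\rho]$.

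Finally, one must make sure the induction is set up so that unambiguity is available for subterms. The cleanest way is to prove the stronger statement ``$\vdash v$ implies $\vdash v[\rho]$ for all $\rho$'' by rule induction without using unambiguity at all, since none of the cases above needs it.
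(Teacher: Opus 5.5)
Your proof is correct and is essentially the paper's argument. The paper inducts structurally on $v$, which, since the typing rules are syntax-directed, amounts to your rule induction on $\vdash v$, and it uses the defining equations of substitution. Your version also makes explicit the fact $\FTV(u[\rho]) = \{y_{\tau[\rho]} \mid y_\tau \in \FTV(u)\}$, which the abstraction side condition needs and the paper's one-line proof leaves implicit, and you rightly observe that unambiguity plays no role.
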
 
\begin{proof}
	By structural induction on $v$, using the definitions of $\TV$ and $\_[\_]$. \qed 
\end{proof}

\subsection{The$\compl$- and $(\compl,\leq)$-lemmas} 
\label{appsubsec-complLemmas}

\begin{mylemma}\rm \label{lem-complTrivial}
	The following hold: 
	\\
	(1) $\compl$ is a partial order (and therefore $\complS$ is a strict partial order) on $\Term$ 
	having the U-terms as the minimal elements and the F-terms as the maximal elements. 
	\\   
	(2) If $t \compl s$, then $\mtpOf(t) \compl \mtpOf(s)$. 
	\\ 
	(3) $\er(t) \compl t$.  
	\\
	(4) If $t \compl s$ then $t$ is unambiguous iff $s$ is unambiguous. 
\end{mylemma}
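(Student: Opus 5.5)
The plan is to prove the four items of Lemma~\ref{lem-complTrivial} separately. Items (1), (3) and (4) go by structural induction (on terms, or on the derivation of $\compl$). Item (2) is a direct case analysis using the right/left inversion lemmas (Lemmas~\ref{lem-leftInvCompl} and~\ref{lem-rightInvCompl}).

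\textbf{Item (1).} I first record that on $\Type_\bot$ the relation $\xi \compl \zeta$ iff $\xi\in\{\bot,\zeta\}$ is a partial order. It is reflexive since $\xi\in\{\bot,\xi\}$. For transitivity, if $\xi\in\{\bot,\zeta\}$ and $\zeta\in\{\bot,\eta\}$, then either $\xi=\bot$ or $\xi=\zeta\in\{\bot,\eta\}$. For antisymmetry, if $\xi\compl\zeta\compl\xi$ and $\xi\neq\zeta$, then both $\xi$ and $\zeta$ equal $\bot$, which is absurd. The minimal element of this order is $\bot$, and the maximal elements are exactly the types.

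The term-level claims then follow as below.
\begin{itemize}
\item Reflexivity on terms is by structural induction on $t$, applying the four rules.
\item Transitivity is by induction on the derivation of $t\compl s$ (equivalently on $t$). In each case I invert $s\compl r$ with Lemma~\ref{lem-leftInvCompl}, using the fact that $s$ has the same top constructor as $t$, and combine the induction hypotheses with transitivity on maybe-types.
\item Antisymmetry is proved in the same way, using injectivity of the term constructors.
\end{itemize}
Strictness of $\complS$ is then immediate.

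For the extremal elements, I read ``U-terms'' as terms all of whose decorations are $\bot$, i.e.\ terms of the form $\er(t)$. I show by induction that if $s\compl t$ and $t$ is a U-term, then $s=t$: inversion gives componentwise $\compl$, and $\xi\compl\bot$ forces $\xi=\bot$. Symmetrically, if $t$ is an F-term and $t\compl s$, then $s=t$, because $\sigma\compl\zeta$ forces $\zeta=\sigma$ when $\sigma\neq\bot$. Conversely, any term that is not a U-term has some position $p$ with a non-$\bot$ decoration, so $t[p:=\bot]\complS t$ by Lemma~\ref{lem-basicsPosUpdate}(1),(2). Likewise, a non-F-term can be strictly extended by putting some type, say a fixed tyvar, at a $\bot$ position; this is a trivial induction. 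Hence the minimal and maximal elements are exactly the U-terms and the F-terms.

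\textbf{Item (2).} I case on the shape of $s$ and apply the right inversion Lemma~\ref{lem-rightInvCompl}. This yields $t$ of the same shape with top decoration $\xi'\compl\xi$, where $\xi=\mtpOf(s)$. For abstractions I check that $\mtpOf$ at $[]$ is the outer decoration $\zeta$, which the inversion relates correctly.

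\textbf{Item (3).} This is a structural induction on $t$: each rule is applied with $\bot\compl\xi$ for the decoration and the induction hypothesis for the subterms.

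\textbf{Item (4).} I induct on the derivation of $t\compl s$. The relation preserves the term skeleton, meaning the variable names, the binders, and the tree shape, and changes only maybe-types. I would make this precise by noting $\er(t)=\er(s)$ whenever $t\compl s$, which is again a simple induction. Unambiguity depends only on which variables are bound at which positions. I expect this to be the only mildly delicate point: the definition of unambiguity mentions the maybe-types $\xi,\zeta$, but it quantifies over all of them existentially. So it is invariant under changing decorations, and it is equivalent to unambiguity of $\er(t)$. Once that is observed, item (4) follows from $\er(t)=\er(s)$.
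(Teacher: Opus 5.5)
Your proposal is correct and follows the paper's route: the paper also proves (1), (2) and (4) by induction on the definition of $\compl$, using inversion for transitivity, and proves (3) by structural induction on $t$. The differences are cosmetic: you invert $s \compl r$ with the left-inversion rules where the paper uses the right-inversion rules (Lemma~\ref{lem-rightInvCompl}), and you spell out details the paper leaves implicit, namely the converse characterization of the extremal elements via $t[p:=\bot]$ and the reduction of (4) to $\er(t)=\er(s)$.
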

\begin{proof}
	(1), (2) and (4): By induction on the definition of $\compl$. 
	For transitivity, we also need the right-inversion rules for $\compl$ (Lemma~\ref{lem-rightInvCompl}).
	
	\noindent 
	(3): By structural induction on $t$, using the definition of $\er$.  
\end{proof}

The next lemma essentially says that any two terms connected by $\compl$ or $\leq$ have the ``same shape'' hence the same positions, and their ``at position'' type annotations are related correspondingly. 
\begin{mylemma} \rm \label{lem-sameShape} 
	(1) If $t \compl s$ then $\Poss(t) = \Poss(s)$.  
	\\
	(2) If $t \compl s$ and $p \in \Poss(t)$ then  $\mtpOf(t,p) \compl \mtpOf(s,p)$.
	\\
	(3) If $t \leq s$ then $\Poss(t) = \Poss(s)$. 
	\\
	(4) If $t \leq s$ and $p \in \Poss(t)$ then  $\mtpOf(t,p) \leq \mtpOf(s,p)$.
\end{mylemma}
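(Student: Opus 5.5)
Parts (1) and (2) go by induction on the derivation of $t \compl s$, and parts (3) and (4) reduce to them via the characterization $t \leq s$ iff $t = s[\rho]$ for some $\rho$.

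\textbf{(1) and (2).} For (1) we do rule induction on $t \compl s$. In the variable and constant cases both sides have position set $\{[]\}$. In the application case $(s_1\,s_2)_\xi \compl (s_1'\,s_2')_{\xi'}$, the induction hypotheses give $\Poss(s_i) = \Poss(s_i')$. The claim then follows directly from the defining equation of $\Poss$ on applications. The abstraction case is identical, using $\Poss((\lambda x_\xi.\,t)_\zeta) = \{[],[1],2\cdot\Poss(t)\}$, which does not depend on the binder's maybe-type.

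For (2) we use the same rule induction, generalized over $p$. In each case we split on the shape of $p$ following the case analysis in the definition of $\mtpOf$:
\begin{itemize}
\item If $p$ is the root position, or $[1]$ in the abstraction case, the goal is exactly one of the premises $\xi \compl \xi'$ or $\zeta \compl \zeta'$ of the rule.
\item If $p = i \cdot p'$ descends into a subterm, then $p' \in \Poss$ of that subterm, and the induction hypothesis for that subterm closes the goal.
\end{itemize}
One small point needs care in the abstraction case: the ``otherwise'' branch of $\mtpOf$ must line up on both sides. It does, since the case split depends only on $p$ and not on the annotations.

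\textbf{(3) and (4).} From $t \leq s$ we obtain $\rho$ with $t = s[\rho]$. Part (3) then becomes $\Poss(s[\rho]) = \Poss(s)$, and part (4) becomes $\mtpOf(s[\rho],p) = \mtpOf(s,p)[\rho]$ for $p \in \Poss(s)$. Both are the annotation lemmas stated in \cref{subsec-proofDevel}. If we prove them here, they follow by structural induction on $s$, using the recursive definitions of $\_[\rho]$ on terms and of $\mtpOf$ from \cref{app-detailsSyntax}.

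For (4) we still have to read $\mtpOf(s,p)[\rho] \leq \mtpOf(s,p)$ correctly, because $\leq$ was only defined on types. We therefore extend it to maybe-types by $\xi \leq \zeta$ iff $\xi = \zeta[\rho]$ for some $\rho$, which makes $\bot \leq \bot$ hold via $\bot[\rho] = \bot$. With that reading, $\rho$ itself witnesses the inequality.

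\textbf{Main obstacle.} The only real subtlety is choosing the right induction in (2). If we induct on $t$ instead of on the derivation, we need the left inversion rules (Lemma~\ref{lem-leftInvCompl}) to decompose $s$. Either route works, and rule induction is the more direct one.
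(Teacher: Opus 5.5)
Your proposal is correct and follows the paper's proof. For (1)--(2) you use rule induction on $\compl$, and for (3)--(4) you obtain $\rho$ with $t = s[\rho]$ and then do a structural induction. You induct on $s$, which is the natural choice since $\_[\rho]$ recurses on $s$; the paper says ``on $t$'', and the two are interchangeable here.

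Your explicit extension of $\leq$ to maybe-types, with $\bot \leq \bot$ via $\bot[\rho] = \bot$, is a useful clarification of something the paper leaves implicit.
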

\begin{proof}
	(1) and (2): By induction on the inductive definition of $\compl$, using the definitions of $\Poss$ and $\mtpOf(t,p)$. 
	
	(3) and (4): We obtain $\rho$ such that $t = s[\rho]$. Then the proof goes by structural induction on $t$. 
	\qed 
\end{proof}

\subsection{The generic reverse greedy lemmas} 
\label{appsubsec-genericReverseGreedy}

Strictly speaking, only Lemma~\ref{lem-noTest} is entirely generic, 
whereas Lemma~\ref{lem-decreaseCompl} also uses some specific properties of $\compl$, 
including transitivity. 

\begin{mylemma}\rm \label{lem-noTest}
	Assume $t$ is an unambiguous term and $v$ is an F-term 
	and $\pickPos \in \Compat$. 
	Let $s = \decrease(\pickPos,v,t)$. 
	Then $\neg\,\exists p.\;\coverageTest(v,s,p)$. 
\end{mylemma}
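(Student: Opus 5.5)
We prove the statement by well-founded induction on the number of annotated positions $|\Poss_{\not=\bot}(t)|$, following the recursion of $\decrease$ and the termination argument of \cref{app-moreSmplkaBlanchette}. More precisely, we establish the claim for every term $t$ (unambiguity plays no real role here, though we may carry it along using Lemma~\ref{lem-basicsPosUpdate}(4)), with $v$ and $\pickPos$ fixed, by strong induction on $n = |\Poss_{\not=\bot}(t)|$.

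We unfold $\decrease(\pickPos,v,t)$ and distinguish the two cases of its definition.

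\textbf{Case $\neg\exists p.\;\coverageTest(v,t,p)$.} Here $\decrease(\pickPos,v,t) = t$, so $s = t$ and the conclusion is exactly the case hypothesis.

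\textbf{Case $\exists p.\;\coverageTest(v,t,p)$.} Since $\pickPos \in \Compat$, we have $\coverageTest(v,t,p_0)$ for $p_0 = \pickPos(v,t)$. In particular, $p_0 \in \Poss(t)$ and $\mtpOf(t,p_0) \not= \bot$. Let $t' = t[p_0 := \bot]$. By definition, $\decrease(\pickPos,v,t) = \decrease(\pickPos,v,t')$, so it suffices to apply the induction hypothesis to $t'$. This requires showing $|\Poss_{\not=\bot}(t')| < n$.

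That measure decrease is the only step with actual content. By Lemma~\ref{lem-basicsPosUpdate}(1), $t' \compl t$, so Lemma~\ref{lem-sameShape}(1) gives $\Poss(t') = \Poss(t)$. Lemma~\ref{lem-basicsPosUpdate}(2) and (3) then show that the annotation at $p_0$ becomes $\bot$ while every other position keeps its annotation. Hence $\Poss_{\not=\bot}(t') = \Poss_{\not=\bot}(t) \sm \{p_0\}$. Because $p_0 \in \Poss_{\not=\bot}(t)$ and this set is finite (positions of a term are finite), the cardinality drops by exactly one.

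I expect no real obstacle beyond this measure-decrease bookkeeping. In a mechanization, the cleanest route is to use the induction principle generated by the (terminating) definition of $\decrease$ itself, which makes the argument a two-line case split.
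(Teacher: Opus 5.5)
Your proposal is correct and is essentially the paper's proof. The paper also argues by induction on the number of annotated positions (alternatively, by the computation induction principle of $\decrease$), and your measure-decrease bookkeeping via Lemma~\ref{lem-basicsPosUpdate} and Lemma~\ref{lem-sameShape}(1) simply spells out the termination argument from the appendix; you rightly take the measure on the input term $t$, which is how the paper's ``$|\Poss_{\not=\bot}(s)|$'' should be read.
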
 
\begin{proof}
	Immediately by induction on $|\Poss_{\not=\bot}(s)|$ (alternatively, by the computation induction principle stemming from 
	the definition of $\decrease$).  \qed
\end{proof}

\begin{mylemma}\rm \label{lem-decreaseCompl}
	Assume $t$ is an unambiguous term and $v$ is an F-term 
	and $\pickPos \in \Compat$. 
	Let $s = \decrease(\pickPos,v,t)$. 
	Then $s \compl \mgen(t)$, in particular $s$ 
	is unambiguous. 
\end{mylemma}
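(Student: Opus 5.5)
The plan is to prove the stronger intermediate fact $s \compl t$ and then obtain $s \compl \mgen(t)$ by transitivity, since $t \compl \mgen(t)$ by definition of $\mgen(t)$ as a well-typed completion of $t$. Here I take $t$ to be typable, as in the intended use $t = \mgen(\cdot)$, so that $\mgen(t)$ is defined via Thm.~\ref{prop-assumed}. Transitivity of $\compl$ is Lemma~\ref{lem-complTrivial}(1).

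For $s \compl t$, I would use induction on $|\Poss_{\not=\bot}(t)|$, following the recursion of $\decrease$ exactly as in the termination argument of \cref{app-moreSmplkaBlanchette}. Equivalently, one can use the computation induction principle for $\decrease$. The generalized statement is: for every unambiguous term $t'$, $\decrease(\pickPos,v,t') \compl t'$ and $\decrease(\pickPos,v,t')$ is unambiguous. There are two cases.

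\textbf{Case $\neg\exists p.\;\coverageTest(v,t',p)$.} Here $\decrease(\pickPos,v,t') = t'$, so the claim follows from reflexivity of $\compl$.

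\textbf{Case $\exists p.\;\coverageTest(v,t',p)$.} Let $p = \pickPos(v,t')$. Because $\pickPos \in \Compat$, we have $\coverageTest(v,t',p)$, so $p \in \Poss(t')$ and $\mtpOf(t',p) \not= \bot$. Let $t'' = t'[p:=\bot]$. The annotation lemmas then supply three facts.
\begin{itemize}
\item By Lemma~\ref{lem-basicsPosUpdate}(1), $t'' \compl t'$.
\item By Lemma~\ref{lem-basicsPosUpdate}(4), $t''$ is unambiguous.
\item By Lemma~\ref{lem-basicsPosUpdate}(2),(3), together with $\Poss(t'') = \Poss(t')$ from Lemma~\ref{lem-sameShape}(1), we get $\Poss_{\not=\bot}(t'') = \Poss_{\not=\bot}(t') \sm \{p\}$. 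Hence $|\Poss_{\not=\bot}(t'')| < |\Poss_{\not=\bot}(t')|$.
\end{itemize}
The induction hypothesis therefore yields $\decrease(\pickPos,v,t'') \compl t''$ and that $\decrease(\pickPos,v,t'')$ is unambiguous. Transitivity gives $\decrease(\pickPos,v,t') = \decrease(\pickPos,v,t'') \compl t'$.

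Finally, unambiguity of $s$ can be read off either from the induction directly or from $s \compl \mgen(t)$ via Lemma~\ref{lem-complTrivial}(4), since $\mgen(t)$ is unambiguous because $t$ is.

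The main obstacle is the bookkeeping in the recursive case. One must check that the position chosen by $\pickPos$ really lies in $\Poss(t')$ and carries an annotation, and this is exactly what $\coverageTest$-compatibility provides. One must also check that the measure strictly decreases, which rests on the position-update lemmas. The rest is routine preorder reasoning.
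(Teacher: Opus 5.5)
Your proof is correct and follows the paper's route: induction on $|\Poss_{\not=\bot}(\cdot)|$ of the argument of $\decrease$ (equivalently, computation induction), using reflexivity and transitivity of $\compl$ and preservation of unambiguity from Lemma~\ref{lem-complTrivial}(1,4). Your two-step factorization, first $s \compl t$ and then $t \compl \mgen(t)$, together with your remark that $t$ must be typable for $\mgen(t)$ to exist, simply spells out what the paper's one-line proof leaves implicit.
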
 
\begin{proof}
	By induction on $|\Poss_{\not=\bot}(s)|$ (alternatively, by the computation induction principle stemming from 
	the definition of $\decrease$), 
	using Lemma~\ref{lem-complTrivial}(1,4).  \qed
\end{proof}

\subsection{Some omitted proofs and proof sketches} 
\label{appsubsec-omittedProofs}

\noindent
\textbf{Proof of Lemma~\ref{lem-otherTriv}.} 
By induction on the definition of $\compl$, using the definition of $\mtpOf$. \qed

\medskip
\noindent
\textbf{Proof of Lemma~\ref{lem-multipleCompl}.} 
By structural induction on $v$, using the definitions of $\compl$ and $\_[\_]$. \qed

\medskip
\noindent
\textbf{Proof of Lemma~\ref{lem-twoMG}.} 
	From the definitions, we have $s \compl v$ and $\vdash v$, and we obtain $\alpha$ such that $\alpha \in \TV(v) \sm \TV(s)$. 
	From this and $s \compl v$, using  the annotation lemmas and the  $\compl$-lemmas, we have
	(a) $\forall p \in \Poss(v) = \Poss(s).\;\alpha \in \TV(v,p) \implies \mtpOf(s,p) = \bot$.
	
	We choose $\beta$ such that (b) $\beta \notin \TV(v)$ (in particular, $\beta \not= \alpha$), and let $v' = v[\beta/\alpha]$. By  syntax basics and $\alpha \in \TV(v)$, we have $v\not=v'$. Moreover, again by syntax basics, from (b) we obtain (c) $v = v[\beta/\alpha][\alpha/\beta] = v'[\alpha/\beta]$.
	
	By the type preservation lemma from  $\vdash v$
	we obtain $\vdash v'$. Moreover,
	by Lemma~\ref{lem-multipleCompl}
	used with the identity and $\beta/\alpha$ as substitutions, from
	$s \compl v$ and (a) we obtain $s \compl v'$.
	Thus, $v'$ is distinct from $v$, and is a well-typed completion of $s$.
	\looseness=-1
	
	It remains to show that $v'$ is a most general well-typed completions of $s$.
	Let $w$ be a well-typed completion of $s$. Since $v$ is a most general completion, we obtain $\rho$ such that $w = v[\rho]$. Moreover, by syntax basics, from (c) we obtain
	$v[\rho]= v'[\alpha/\beta][\rho] = v'[\rho \scmp (\alpha/\beta)]$. So $w \leq v'$, as desired.
	\qed

\section{More Details on the AI Paper Development}
\label{app-detailsAganticPaper}

The human and AI agent pen-and-paper developments have been developed mostly independently, 
with \emph{the only influence occurring from the AI agent to the human}---who, as we report in the main paper, obtained the idea of stating and proving minimality from the AI agent.  Unavoidably, independent developments lead to different design decisions, and this was the case here: 
\begin{itemize}
	\item[(1)]  The human expert considered the problem starting from an arbitrary typable C-term $t$, (which corresponds to an Isabelle term, i.e., features the usual Church-style type annotations), then proved in Prop~\ref{prop-Cterm-uniqueCompleteness} that such a term has a unique well-typed completion, namely the F-term $\mgen(t)$,  and applied the algorithm starting with $\mgen(t)$. By contrast, the AI ignored the original term $t$ and worked directly with an F-term (corresponding to what the human called $\mgen(t)$).\footnote{Incidentally, since the AI did not prove a result analogous to Prop~\ref{prop-Cterm-uniqueCompleteness}, it did not need syntactic-constructor-aware inversion rules or anything equivalent to that.} In this respect, the result proved by the AI is less complete, but consistent 
	because $\er(\mgen(t))$ is the same as $\er(t)$.
	\item[(2)] The human expert did not consider typing contexts, after making an argument that these are ``not interesting'' in that the problem can be reduced to an empty-context problem (by considering the typing context to be part of the signature). By contrast, the AI considered typing contexts, and formulated the results with the additional requirements for the proper treatment of tyvars from the  context by the substitutions. In this respect, the result proved by the AI is more complete, as it avoids any informal (meta-)argument in order to cover the general case.
	 \item[(3)] In the human account, the algorithm operates with an evolving (partially annotated) term of the same shape as the original term, which iteratively loses annotations. In the AI account, the algorithm, and the corresponding theorems, operate instead on sets of positions, and therefore use inclusions between sets of positions as opposed to the annotation subsumption relation $\compl$. This makes the AI account more direct, although arguably less intuitive.\footnote{Incidentally, the AI, following the implementation, represents positions differently from (but equivalently to) the human, namely as numbers rather than lists.}
	  \item[(4)] The AI fixed a lot of items and made several assumptions globally, which were shared by many lemmas and proposition statements, including the main results (in the style of Isabelle locales), whereas the human preferred to distribute the assumptions as needed in each statement. These two approaches trade succinctness on the one hand with readability and precision on the other hand.%
	  \footnote{Incidentally, the AI paper introduced some scoping issues, such as a definition inlined inside a theorem and referred to from another theorem, which were nevertheless sorted at autoformalization time. It also introduced some useless/tautological but harmless assumptions.} 
\end{itemize}

\section{More Details on the Autoformalization Experiments}
\label{app-detailsAutoformalization}

We show the complete statements of the formalization in \Cref{app-isabelleAI} and \Cref{app-isabelleHuman}, for the AI-authored and human-authored proofs respectively.

\subsection{AI-authored proof}
\label{app-isabelleAI}

\newcommand{\isafigscale}{0.35}

The main results of the formalization are stated within the context of the locale \texttt{annotation\_problem}~(cf. \cref{fig:ai_locale}).
It introduces
\begin{itemize}
\item the fully annotated input term \texttt{a}, corresponding to $t$ in the definition of $\smobla$; 
\item a most general completion \texttt{a\_star} (of \texttt{a} stripped from type annotations), corresponding to $\mgen(\er(t))$;
\item a type substitution relating \texttt{a} and \texttt{a\_star};
\item a context $\mathtt{\Gamma}$ and a constant signature \texttt{const\_type};
\item well-typedness assumptions on \texttt{a} and \texttt{a\_star};
\item a freshness assumption on the type substitution, modeling that type inference uses fresh type variables.
\end{itemize}
Both completeness (cf. \cref{fig:ai_completeness}) and minimality (cf. \cref{fig:ai_minimality}) are stated within the \texttt{annotation\_problem} locale. 
A limitation inherited from the AI-authored paper is that both statements refer to the set of annotation positions, not only the output term \texttt{t\_out}.

\begin{figure}[t]
\includegraphics[scale=\isafigscale]{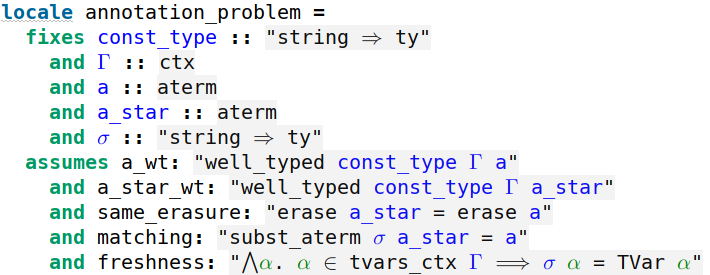}
\caption{AI-based annotation problem locale: well-formedness assumptions on the input \texttt{a} as well as type inference in terms of \texttt{a\_star}.}
\label{fig:ai_locale}
\bigskip
\includegraphics[scale=\isafigscale]{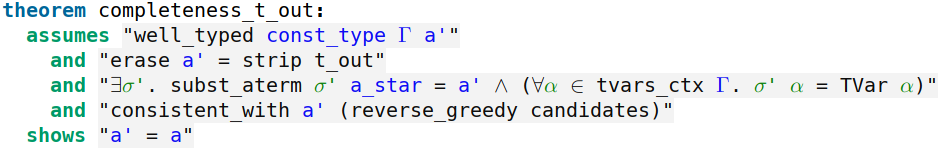}
\caption{\looseness=-1 AI-based completeness statement: 
\texttt{a} is the unique term \texttt{a'} that is \emph{consistent with} \texttt{a} at the positions determined by the \texttt{reverse\_greedy} algorithm.
The locale context only assumes type inference properties for \texttt{a},
so this assumption is repeated here for \texttt{a'}.}
\label{fig:ai_completeness}
\bigskip
\includegraphics[scale=\isafigscale]{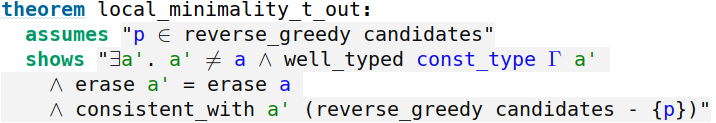}
\caption{AI-based minimality statement:
removing any position the algorithm deems necessary allows us to obtain a well-typed term different from \texttt{a} that agrees with \texttt{a} on the remaining positions.
}
\label{fig:ai_minimality}
\end{figure}

\subsection{Human-authored proof}
\label{app-isabelleHuman}

The formalization defines the locale \texttt{signature\_with\_mgen}~(cf. \cref{fig:human_signature_locale}),
it fixes a type signature and also follows the paper and assumes Thm.~\ref{prop-assumed}.
The \texttt{algorithm} locale (cf. \cref{fig:human_algorithm_locale}) abstracts over specific $\pickPos$ functions,
where the \texttt{compat} assumption corresponds to $\Compat$.
Both completeness (cf. \cref{fig:human_completeness}) and minimality (cf. \cref{fig:human_minimality}) are stated within this \texttt{algorithm} locale. 

\begin{figure}[t]
\includegraphics[scale=\isafigscale]{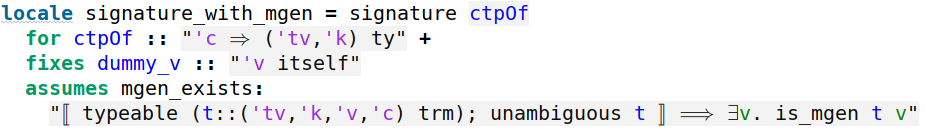}
\caption{Human-based signature locale: corresponds to Thm.~\ref{prop-assumed}}
\label{fig:human_signature_locale}
\bigskip
\includegraphics[scale=\isafigscale]{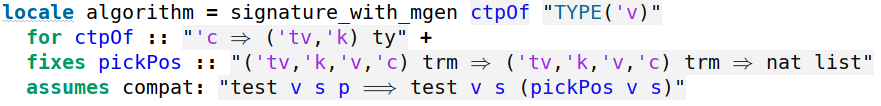}
\caption{Human-based algorithm locale}
\label{fig:human_algorithm_locale}
\bigskip
\includegraphics[scale=\isafigscale]{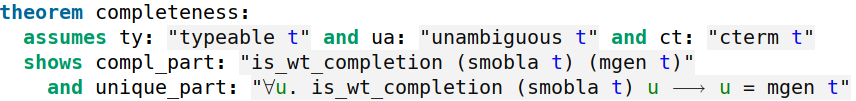}
\caption{Human-based completeness statement: corresponds to Thm.~\ref{thm-correctPrintingHuman}}
\label{fig:human_completeness}
\bigskip
\includegraphics[scale=\isafigscale]{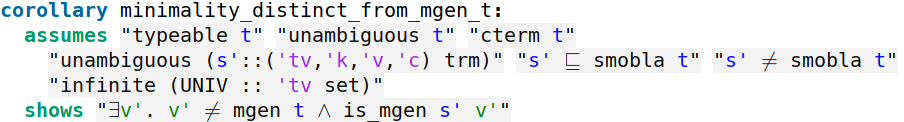}
\caption{Human-based minimality statement: corresponds to Thm.~\ref{thm-correctPrintingMinimalHuman}}
\label{fig:human_minimality}
\end{figure}

\fi

\end{document}